\definecolor{mygreen}{rgb}{0,0.6,0}
\definecolor{mygray}{rgb}{0.85,0.85,0.85}
\definecolor{mymauve}{rgb}{0.58,0,0.82}
\newtheorem{theorem}{Theorem}
\newtheorem{lemma}[theorem]{Lemma}
\newtheorem{definition}{Definition}
\title{An Analysis of the Correctness and Computational Complexity of Path Planning in Payment Channel Networks}
\authorrunning{Corcoran \& Lewis}
\titlerunning{An Analysis of the Correctness and Computational Complexity}
\begin{document}
\maketitle
\def\ed#1{{\color{blue}{\bf [Ed:} {\it{#1}}{\bf ]}}}
\def\Ed#1{{\color{blue}{\bf [Ed:} {\it{#1}}{\bf ]}}}

\begin{abstract}
Payment Channel Networks (PCNs) are a method for improving the scaling and latency of cryptocurrency transactions. For a payment to be made between two peers in a PCN, a feasible low-fee path in the network must be planned. Many PCN path planning algorithms use a search algorithm that is a variant of Dijkstra's algorithm. In this article, we prove the correctness and computational complexity of this algorithm. Specifically, we show that, if the PCN satisfies a consistency property relating to the fees charged by payment channels, the algorithm is correct and has polynomial computational complexity. However, in the general case, the algorithm is not correct and the path planning problem is NP-hard. These newly developed results can be used to inform the development of new or existing PCNs amenable to path planning. For example, we show that the Lightning Network, which is the most widely used PCN and is built on the Bitcoin cryptocurrency, currently satisfies the above consistency property. As a second contribution, we demonstrate that a small modification to the above path planning algorithm which, although having the same asymptotic computational complexity, empirically shows better performance. This modification involves the use of a bidirectional search and is empirically evaluated by simulating transactions on the Lightning Network.
\end{abstract}

\smallskip
\noindent \textbf{Keywords.} payment channel networks; path planning; lightning network.

\section{Introduction}
A cryptocurrency is a digital currency that is decentralised in nature and not issued or controlled by a central authority. The most famous and successful example of a cryptocurrency is Bitcoin, invented by Satoshi Nakamoto~\cite{nakamoto2008bitcoin}. Cryptocurrencies are enabled by the use of blockchain systems, which are a type of distributed system that maintains a ledger or database of digital currency transactions. A necessary component of any blockchain system is a \emph{consensus algorithm}. This is an algorithm that allows all peers in the system to achieve agreement regarding the state of the ledger \cite{xiao2020survey}. As a consequence of the need to ensure decentralisation and security, many consensus algorithms exhibit scalability issues relating to low transaction throughput (the number of transactions per second) and transaction confirmation latency (the time until a transaction can be considered settled). The former property can, in turn, lead to high transaction fees when large sets of transactions are competing to get into the blockchain. For example, Bitcoin can currently only process seven transactions per second and has a transaction confirmation latency of approximately ten minutes. This is assuming that the transaction is actually included in the very next block and is considered confirmed immediately when this happens. In contrast, the VISA payment system can process 65,000 transactions per second and has a transaction confirmation latency of just a few seconds. The low transaction throughput of Bitcoin means that, in times of high demand, transaction fees can be expensive and greatly exceed the cost of many everyday purchases.

To overcome the above scaling challenge, several potential solutions have been proposed \cite{hafid2020scaling}. One promising direction of research is \textit{payment channel networks} (PCNs) that allow the consolidation of a larger set of transactions into a smaller set, where individual transactions can be confirmed almost instantly. The Lightning Network (LN) is the most widely used PCN and is designed to work with Bitcoin~\cite{antonopoulos2021mastering}. To illustrate the potential benefits of PCNs, consider the case where Alice is a merchant and Bob is a customer. Alice makes regular small payments to Bob, and sometimes Bob makes refunds to Alice. Instead of submitting each transaction to be processed immediately by the consensus algorithm, Alice and Bob can establish a payment channel between them that maintains a balance that is updated after each transaction. When Alice and Bob have completed their transactions, they submit the final balance to be processed by the consensus algorithm. Since only a single transaction is processed by the consensus algorithm, this can increase transaction throughput significantly. Furthermore, since each transaction is only processed locally, the transaction latency is reduced to the time required to send a small number of network messages between Alice and Bob. By establishing a network of payment channels, individuals can also make payments to others who they do not share a direct payment channel with. For example, consider the case where Alice wants to make a payment to Charlie but has not established a corresponding payment channel. If Alice has an existing payment channel to Bob, and Bob has an existing payment channel to Charlie, then Alice can make the payment to Bob and Bob can, in turn, forward the payment to Charlie. PCNs are typically implemented using smart contracts, which are pieces of code executed on a blockchain that ensure the security of the payments.

Despite the potential of PCNs, the technology still faces several challenges. This work focuses on path planning which concerns finding a useful path in the network of payment channels for making a given payment. There exists no universal method for performing path planning in most PCNs including the LN. Instead, different implementations use different methods for solving this problem \cite{decker2018eltoo}. Given that the size of PCNs can be very large, it is important that path planning is performed in a correct and computationally efficient manner. Most current path planning algorithms use a variant of Dijkstra's algorithm. Dijkstra's algorithm assumes that the network in question contains arcs with constant and non-negative weights. This assumption does not hold in the LN where arc weights correspond to fees charged for forwarding payments and vary as a function of the payment amount. Hence, existing correctness and computational complexity results for Dijkstra's algorithm do not generalise to this variant. In fact, to the authors' knowledge, there exists no analysis of this algorithm's correctness and computational complexity in this setting. In this work, we address this research gap. These new results tell us that, if the PCN in question satisfies a consistency property relating to the fee charged for forwarding payments, the algorithm is correct and has polynomial computational complexity. Otherwise, in the general case, the algorithm is not correct and the path planning problem is NP-hard. In turn, we show that the LN satisfies this property. A second contribution made in this article is the proposal of a small modification to the above path planning algorithm which, though having the same asymptotic computational complexity, empirically shows better performance. This modification involves the use of a bidirectional search.

The remainder of this paper is structured as follows. In Section~\ref{sec:related_works} we review related works on path planning in PCNs. In Section~\ref{sec:problem_def} we formally define the problem of path planning in the LN. This section also identifies a connection between LN path planning and a specific \emph{recurrence relation}, which we solve. In Sections~\ref{sec:uni_path_find} and \ref{sec:formal_analysis} we formally describe the variant of Dijkstra's algorithm mentioned above and provide an analysis of its correctness and computational complexity. In Section~\ref{sec:bi_path_find} we then describe the proposed modification to this algorithm that uses bidirectional search and prove its correctness and computational complexity. In Section~\ref{sec:results} we then demonstrate the computational benefits of this method empirically. Finally, in Section~\ref{sec:conclusions} we conclude this work.

\section{Related Works}
\label{sec:related_works}
As discussed in the introduction to this article, a variant of Dijkstra's algorithm is a path planning algorithm frequently used with PCNs. Given the importance of path planning, several other algorithms have also been proposed. The remainder of this section presents a review of these algorithms followed by a survey of other relevant related topics. A more in-depth review can be found in \cite{rebello2024survey}.

Solutions to the path planning problem can broadly be categorised as centralised methods and decentralised methods. In centralised methods, a single peer independently computes the entire payment path. An application of centralised methods, commonly known as source-based routing, involves the payment sender computing a payment path and using onion routing to make the payment using this path. Onion routing ensures that each peer in the path only learns the identity of the peer immediately before and the peer immediately after them in the path. They do not learn the payment sender, receiver or amount. However, since the path planning task is performed exclusively by the payment sender, source-based routing requires that the sender has significant computational resources and knowledge of the current state of the network. A variant of Dijkstra's algorithm is the most commonly used centralised path planning method. In fact, this method is used by most LN node implementations including LND and Core Lightning which are the most commonly used implementations. Several authors have also proposed centralised path planning methods that compute a set of distinct paths between the source and destination vertices in PCNs \cite{sivaraman2020high, bagaria2020boomerang, pickhardt2021optimally}. This allows multi-part payments to be made, whereby a single payment is split into a set of smaller payments, and each is routed using a distinct path. The main motivations for multi-part payments are to allow larger payments to be made given the finite capacity of individual payment channels and to obfuscate the total amount being transferred.

In decentralised path planning methods, each peer that the payment passes through learns the payment receiver and uses this information to independently decide the next peer to forward the payment to until it is eventually arrives at the receiver. Decentralised methods do not require the payment sender to have significant computational resources or knowledge of the current state of the network. However, since each peer in the payment path learns the identity of the payment receiver, they provide relatively poor privacy. Hence decentralised methods are not used by most LN node implementations. Roos et al.~\cite{roos2017settling} proposed a greedy decentralised path planning algorithm entitled SpeedyMurmurs. This algorithm assigns coordinates to each vertex in the network such that a path from a given source vertex can be obtained by iteratively selecting arcs to vertices closer to the destination vertex. Similarly, Prihodko et al.~\cite{prihodko2016flare} proposed a decentralised path planning algorithm called Flare. In this method, each peer maintains a routing table that contains payment paths to both peers in their local neighbourhood and a small subset of other peers known as beacons. To determine a payment path, the routing tables corresponding to both the sender and receiver are considered. If both routing tables contain a common peer, a payment path is constructed by combining payment paths to and from this peer. Lin et al.~\cite{lin2020rapido} proposed a path planning method that partitions the PCN in a set of regions and elects a peer in each region to be a corresponding beacon. Using this method, path planning is performed by forwarding the payment in question to a beacon. This beacon then forwards the payment to the receiver if the receiver is contained in their respective region. Otherwise, the beacon forwards the payment to the beacon corresponding to the region within which the receiver is contained.

To transfer a given amount along a given path in a PCN, all the channels in question must have sufficient balance or liquidity in the direction in question. To help overcome this challenge, several rebalancing methods have been proposed to adjust channel balances to make paths feasible. Broadly speaking, these methods involve peers making cyclic payments to themselves that move a portion of the balance on one of their channels to another of their channels \cite{pickhardt2020imbalance}. For privacy reasons, the channel balances are not shared among peers in a PCN because by monitoring changes in channel balances, an adversary can infer the source, destination and amount of all payments. A consequence of this is that path planning reduces to a trial and error process whereby different paths are attempted until a feasible one with sufficient balance to perform the payment in question is found. Tang et al.~\cite{tang2020privacy} considered whether it is possible for channels to share noisy balances while still maintaining privacy and found this to be challenging. Dotan et al.~\cite{dotan2022twilight} proposed a method that is robust to balance probing attacks that attempt to determine arc balances. This is achieved by adding noise to a channel's response to a request to forward a payment.

In a PCN, when a channel forwards a payment intended for a third party, they charge a fee for providing this service. Several methods have been proposed to automatically determine a suitable fee to charge, considering objectives such as maximising profit or maintaining a desired channel balance \cite{di2018routing, ersoy2020profit, van2021merchant}. However, by monitoring changes in fees charged and combining this with knowledge of how fees are determined, an adversary can potentially infer information about the transactions being processed by different channels. Tochner et al.~\cite{tochner2019hijacking} demonstrated that, by exploiting the fact that most path planning algorithms select paths with the lowest fees, an adversary can establish channels with low fees that a significant percentage of payments use. These channels can then be used to perform several attacks, including a denial-of-service (DoS) attack.

Several works have also analysed the topological structure of PCNs. Martinazzi et al.~\cite{martinazzi2020evolving} and Seres et al.~\cite{seres2020topological} demonstrated that the LN tends to exhibit small-world network properties with a small proportion of nodes with high centrality that act as hubs. Rohrer et al.~\cite{rohrer2019discharged} examined the resilience of the LN with respect to random failures and targeted attacks. The authors also demonstrated that the LN exhibits small-world and scale-free network properties. Zabka et al.~\cite{zabka2024centrality} performed a centrality analysis of the LN. Similarly, Kotzer et al.~\cite{kotzerbraess2023} demonstrated that Braess's paradox may exist in a PCN, whereby adding additional channels may negatively impact the ability to successfully make payments.

\section{Problem Definition}
\label{sec:problem_def}
In this work, we model a snapshot of the Lightning Network (LN) at a given moment in time as a directed graph $G= (V, E)$. We use the term ``snapshot'' because both the topology and channel properties of the LN are dynamic. The topology is dynamic in the sense that, over time, new channels can be created and existing ones can be deleted. LN channel properties are also dynamic in the sense that, over time, the properties described below of a given channel can change.

In our model, each LN peer is modelled as a vertex $v \in V$. Since LN channels are bidirectional, each LN channel between the vertices $v_i$ and $v_j$ is modelled as a pair of arcs $(v_i, v_j) \in E$ and $(v_j, v_i) \in E$. Each arc has a balance that equals the maximum payment amount that can be transferred along that arc and is modelled as a map $b: E \rightarrow \mathbb{Z}^{\geq}$. For a given LN channel, the sum of the corresponding pair of arc balances is known as the capacity of the channel in question. Although the capacity of a given channel remains constant, the corresponding arc balances change as amounts are transferred between the vertices in question. As an example, consider the pair of arcs $(v_i, v_j) \in E$ and $(v_j, v_i) \in E$ corresponding to a channel between the vertices $v_i$ and $v_j$. If the amount $a \in \mathbb{Z}^{>}$ is transferred from $v_i$ to $v_j$ using this channel, the balance $b(v_i, v_j)$ reduces by $a$ while the balance $b(v_j, v_i)$ increases by $a$. To transfer the amount $a$ along the arc $(v_i, v_j)$, the condition $a \leq b(v_i, v_j)$ must therefore be satisfied. That is, the arc in question must have a sufficient balance.

If one wishes to transfer an amount $a$ along an arc $(v_i, v_j) \in E$, the arc will also charge a fee for this service. This fee is charged by the arc operator, which is the source vertex $v_i$, and is parameterised by two values: a \emph{base fee} and a \emph{fee rate}. The base fee is a fixed value, while the fee rate is a proportion of the amount being transferred. We model the base fees and fee rates as maps $f_b: E \rightarrow \mathbb{Z}^{\geq}$ and $f_r: E \rightarrow \mathbb {R}^{\geq}$ respectively. In turn, we model the total fee charged by an arc $e \in E$ to transfer an amount $a \in \mathbb{Z}$ as the map $f_{br}: E \times \mathbb{Z}^{\geq} \rightarrow \mathbb{Z}^{\geq}$:
\begin{equation} \label{eq:f_br}
	f_{br}(e,a) = f_b(e) + f_r(e) \times a.
\end{equation}
To illustrate the above features of the LN, consider the toy LN displayed in Figure~\ref{fig:ln_eg_1}. This graph contains a single arc $(v_i, v_j)$ that has a base fee of 2, a fee rate of 0.1 and a balance of 20. Consider the case where we wish to transfer an amount 10 from $v_i$ to $v_j$ along this arc. This amount is less than or equal to the arc balance making the transfer feasible. To perform this transfer, vertex $v_i$ charges a fee of $2 + 0.1 \times 10 = 3$. An important point to highlight is that, to transfer an amount $a$ from a vertex $v_i$ to a vertex $v_j$ using an arc $(v_i, v_j)$, we must first transfer $a$ plus the fee in question to $v_i$. In this case, a value of $10+3=13$ is therefore transferred to $v_i$.

\begin{figure}
	\begin{center}
		\includegraphics[height=1cm]{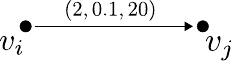}
		\caption{An example LN containing two vertices and a single arc is displayed. The arc is labelled with a tuple containing the corresponding base fee, fee rate, and balance respectively.}
		\label{fig:ln_eg_1}
	\end{center}
\end{figure}

At any given time, the LN will typically be a sparse graph with only a small proportion of vertex pairs being connected by an arc. If we wish to transfer an amount from a source vertex $s$ to a destination vertex $t$ it may be the case that no arc exists from $s$ to $t$, or that this arc has an insufficient balance. On the other hand, it may be possible to transfer the amount along a sequence of arcs $(v_1, v_2), (v_2, v_3), \dots, (v_{n-1}, v_n)$ where $s=v_1$ and $t=v_n$. Such a sequence is known as a \emph{path} from $s$ to $t$. For example, consider the toy LN snapshot displayed in Figure~\ref{fig:ln_eg_2} and the case where we wish to transfer a given amount from $v_i$ to $v_j$. Here, there exists no arc from $v_i$ to $v_j$ but the path $(v_i, v_k), (v_k, v_j)$ may potentially be used to perform the transfer in question.

\begin{figure}
\begin{center}
\includegraphics[height=1cm]{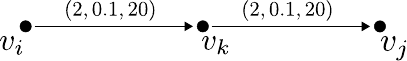}
\caption{An example LN containing three vertices and two arcs is displayed. Each arc is labelled with a tuple containing the corresponding base fee, fee rate and balance.}
\label{fig:ln_eg_2}
\end{center}
\end{figure}

To transfer an amount $a$ from vertex $s$ to vertex $t$ along a path, one must transfer to each vertex in that path the amount $a$ plus the sum of fees for subsequent arcs in the path. Consider again the LN displayed in Figure~\ref{fig:ln_eg_2} and the case where we wish to transfer an amount $a$ from $v_i$ to $v_j$ along the path $e_1, e_2$ where $e_1 =(v_i, v_k)$ and $e_2=(v_k, v_j)$. To successfully transfer the amount $a$ to $v_j$, we must transfer the amount $a + f_{br}(e_2, a)$ to $v_k$ and the amount $a + f_{br}(e_2, a) + f_{br}(e_1, a + f_{br}(e_2, a))$ to $v_i$. Furthermore, the balance of the arc $e_2$ must be greater than or equal to $a$, and the balance of the arc $e_1$ must be greater than or equal to  $a + f_{br}(e_2, a)$.

Let $e_1, e_2, \dots, e_n$ be a path from $s$ to $t$ and let $v_1, v_2, \dots, v_{n+1}$ be the sequence of $n+1$ vertices in this path. Also, let $a_i$ be the amount that must be transferred to vertex $v_i$ such that, ultimately, the amount $a$ is successfully transferred to $t$. The sequence of values $a_{n+1}, a_{n}, \dots, a_{1}$ is defined by the recurrence relation in Equation~(\ref{eq:recurrence_relation}) with the condition $a_{n+1}=a$ \cite[Chapter~7]{charalambides2002enumerative}. In the appendix of this article, we solve this recurrence relation to give a closed-form expression for each $a_i$ value.
\begin{equation} \label{eq:recurrence_relation}
	a_{i-1} = a_i + f_{br}(e_i, a_i).
\end{equation}

We refer to the problem of finding a path from $s$ to $t$ that minimises the total arc fees while satisfying the necessary balance constraints as the \emph{path planning problem}. Most solutions to the path planning problem use \emph{source-based path planning} whereby the payment source performs the path planning task. Note that, many implementations of path planning in the LN consider factors relating to both arc fees and arc reputation \cite{kumble2021lightning, andreescu2021comparing}. However, in this work, we do not consider factors relating to arc reputation because these cannot be determined without access to historical transaction data.

\section{Variant of Dijkstra’s algorithm}
\label{sec:uni_path_find}
Dijkstra's algorithm is a well-known path planning algorithm that constructs a lowest cost path from a source vertex $s$ to a destination vertex $t$ \cite{sanders2019sequential}. It achieves this by maintaining a map $c: V \rightarrow \mathbb{R}$ that stores an upper bound on the cost from $s$ to the vertex in question. The algorithm visits vertices in the order of their cost from the vertex $s$ such that when a vertex is visited the above bound becomes tight. Dijkstra's algorithm terminates when the vertex $t$ is visited. When this happens, we obtain a path of lowest cost from $s$ to $t$ by backtracking through the exploration steps.

In its basic form, Dijkstra's algorithm cannot be directly applied to solve the LN path planning problem. This is because, to make a payment of amount $a$ from $s$ to $t$, it is first necessary to know the amount $a' \geq a$ that must be transferred initially to $s$ such that the amount $a$ is transferred to $t$ after all fees have been subtracted. Let $a_v$ be the amount transferred to vertex $v$ so that $a$ can, in turn, be transferred to $t$ using a lowest-fee path from $v$ to $t$. Note that, $a_t=a$. If there exists no path from $v$ to $t$, then $a_v$ is set to $\infty$. Given an arc $(v,v')$, the value $a_{v'}$ is related to the value $a_v$ by the following, which is derived from Equation~(\ref{eq:recurrence_relation}):
\begin{equation} \label{eq:recurrence_relation_dijkstra_1}
	a_{v'} = a_{v} - f_{br}((v,v'), a_{v'}).
\end{equation}

Initially, $a_{v}$ is unknown for all $v \in V$ apart from $t$. However, applying Dijkstra's algorithm to the graph $G$ to construct a lowest-fee path from $s$ to $t$ requires knowledge of $a_s$. To illustrate this, consider the LN displayed in Figure~\ref{fig:dijkstra_eg} and the case where we wish to make a payment of amount $a=10$ from $s$ and to $t$. The vertex $s$ has two adjacent arcs $(s, i)$ and $(s, j)$ which have equal base fees but different fee rates. The value $a_t=10$; however, the value $a_s$ is initially unknown and, therefore, it is initially not possible to compute the values $a_i$ and $a_j$ by applying Equation~(\ref{eq:recurrence_relation_dijkstra_1}) to the arcs $(s,i)$ and $(s,j)$ respectively. Using a calculation described later, it turns out that $a_i=13$ and $a_j=30$. The higher latter value is a consequence of the high base fee and fee rate of the arc $(j,t)$ which means that a large amount would need to be transferred to the vertex $j$.

\begin{figure}
\begin{center}
\subfigure[]{\includegraphics[height=3.5cm]{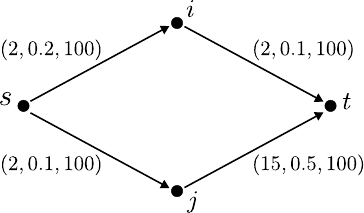}
\label{fig:dijkstra_eg}}
\hspace{1cm}
\subfigure[]{\includegraphics[height=3.5cm]{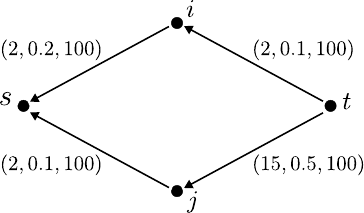}
\label{fig:dijkstra_eg_reverse}}
\caption{An example LN is displayed in (a) and the corresponding transpose, where arc directions are reversed, is displayed in (b).}
\end{center}
\end{figure}

This challenge can be overcome by constructing a lowest-fee path in the opposite direction from $t$ to $s$ instead of from $s$ to $t$. This approach is motivated by the fact that we initially only know the value $a_t$. Given an arc $(v,v')$, the value $a_{v}$ is related to the value $a_{v'}$ by the following:
\begin{equation} \label{eq:recurrence_relation_dijkstra_2}
	a_{v} = a_{v'} + f_{br}((v,v'), a_{v'}).
\end{equation}

By applying this equation recursively from the vertex $t$, we can now compute the value $a_v$ for all vertices $v$ where there exists a feasible path from $v$ to $t$. For example, consider again the LN displayed in Figure~\ref{fig:dijkstra_eg}. Using the fact that $a_t=10$, we can compute the values $a_i$ and $a_j$ by applying Equation~(\ref{eq:recurrence_relation_dijkstra_2}) to the arcs $(i,t)$ and $(j,t)$ respectively. These values in question are $10 + 2 + 0.1 \times 10 = 13$ and $10 + 15 + 0.5 \times 10 = 30$ respectively. Applying Equation~(\ref{eq:recurrence_relation_dijkstra_2}) recursively to the arcs $(s,i)$ and $(s,j)$, the value of $a_s$ is computed to be the minimum of $13 + 2 + 0.2 \times 13 = 17.6$ and $30 + 2 + 0.1 \times 30 = 35$, which equals $17.6$. Since the vertex $s$ has now been encountered, we have discovered a lowest-fee path from $s$ to $t$ which involves transferring a value of $17.6$ to vertex $s$, next transferring a value of $13$ to vertex $i$ and finally transferring a value of $10$ to vertex $t$.

The above path planning process corresponds to searching for a lowest-fee path from $t$ to $s$ in the transpose of the graph $G$, denoted by $G^T$. The transpose of a graph $G$ equals the graph $G$ in which arc directions have been reversed. Figure~\ref{fig:dijkstra_eg_reverse}, for example, shows the transpose of the graph from Figure~\ref{fig:dijkstra_eg}. In the following theorem, we prove that a lowest-fee path from $t$ to $s$ in $G^T$ corresponds to a lowest-fee path from $s$ to $t$ in $G$. Hence, the problem of determining a path with the latter property can be reformulated as the problem of determining a path with the former property. In turn, searching for a lowest-fee path from $t$ to $s$ in $G^T$ represents a correct approach.

\begin{theorem} \label{thm:reverse_search}
Let $G$ be a graph representation of the LN and let $G^T$ be its transpose. Let $(v_1, v_2)$, $(v_2, v_3), \dots, (v_{n-1}, v_n)$ be a lowest-fee path in $G^T$ from $t$ to $s$, where the amount transferred to $t$ is $a$. The transpose path $(v_n, v_{n-1}), (v_{n-1}, v_{n-2}), \dots, (v_2, v_1)$ in $G$ is a lowest-fee path from $s$ to $t$ where the amount transferred to $t$ is $a$.
\end{theorem}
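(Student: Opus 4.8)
The plan is to establish a bijection between paths from $t$ to $s$ in $G^T$ and paths from $s$ to $t$ in $G$, and to show that this bijection preserves both the feasibility (balance) constraints and the total fee. Since reversing arc directions twice returns the original graph, the map that sends the path $(v_1,v_2),(v_2,v_3),\dots,(v_{n-1},v_n)$ in $G^T$ to its transpose $(v_n,v_{n-1}),(v_{n-1},v_{n-2}),\dots,(v_2,v_1)$ in $G$ is clearly a bijection between the two path sets (with $v_1=t$, $v_n=s$). So the only real content is: (i) the amounts transferred to each intermediate vertex are the same whether we view the path in $G^T$ or in $G$, and (ii) consequently the balance constraints and the total fee agree. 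Point (ii) follows immediately from (i), because the fee charged on an arc depends only on the arc and the amount passing along it, and the balance constraint on an arc likewise depends only on that amount.

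For point (i), I would argue that both the forward construction (Equation~(\ref{eq:recurrence_relation_dijkstra_2}), applied recursively from $t$ when building the path in $G^T$) and the payment-routing semantics in $G$ (Equation~(\ref{eq:recurrence_relation})) generate the \emph{same} sequence of amounts on the \emph{same} sequence of vertices. Concretely, label the vertices along the $G^T$-path as $u_1=t, u_2, \dots, u_n=s$, and let $a_{u_1}=a$. Building the lowest-fee path in $G^T$ from $t$ outward, the amount associated with $u_{k}$ is obtained from the amount at $u_{k-1}$ by $a_{u_k} = a_{u_{k-1}} + f_{br}((u_k,u_{k-1}),a_{u_{k-1}})$, i.e. Equation~(\ref{eq:recurrence_relation_dijkstra_2}) with the arc $(u_k,u_{k-1})\in E$. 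Now reverse: the transpose path in $G$ is $(u_n,u_{n-1}),\dots,(u_2,u_1)$, i.e. it is a path from $s=u_n$ to $t=u_1$ with vertex sequence $u_n,u_{n-1},\dots,u_1$. Relabelling this as $w_1=u_n=s, w_2=u_{n-1},\dots,w_n=u_1=t$ and applying Equation~(\ref{eq:recurrence_relation}) with $a_{w_n}=a$, the recurrence $a_{w_{i-1}} = a_{w_i} + f_{br}((w_i,w_{i-1}),a_{w_i})$ — with $(w_i,w_{i-1}) = (u_{n-i+1},u_{n-i+2})$, which is exactly the $G^T$-arc whose transpose lies on the $G$-path — coincides term-for-term with the recurrence above. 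A short induction on $k$ then gives $a_{w_{n-k+1}} = a_{u_k}$ for all $k$, i.e. every vertex carries the identical amount in the two views. The same arc therefore carries the same amount in both views, so an arc's balance is sufficient in one view iff it is sufficient in the other, establishing feasibility equivalence.

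Having established (i) and (ii), the theorem follows by a standard optimality transfer: the transpose map is a bijection between the feasible $t$-to-$s$ paths in $G^T$ and the feasible $s$-to-$t$ paths in $G$ that preserves total fee; hence a minimiser on one side maps to a minimiser on the other. In particular, the given lowest-fee path in $G^T$ maps to a lowest-fee path in $G$, and the amount transferred to $t$ is $a$ in both, which is exactly the claim.

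I expect the only delicate point to be bookkeeping with the two opposite indexings — making sure that the arc $(v_i,v_{i+1})$ appearing in Equation~(\ref{eq:recurrence_relation}) when the path is read as an $s$-to-$t$ path in $G$ really is the transpose of the arc that Equation~(\ref{eq:recurrence_relation_dijkstra_2}) is applied to when the path is grown outward from $t$ in $G^T$, and that the direction of the recurrence (from $t$ outward vs.\ from $s$ accumulating fees back toward $s$) lines up. Once the index translation $i \leftrightarrow n-i+1$ is fixed and the base cases $a_t=a$ are matched, the inductive step is a direct substitution into $f_{br}$ with no nontrivial algebra, since $f_{br}(e,\cdot)$ has not been constrained beyond Equation~(\ref{eq:f_br}). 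No use of the consistency property is needed here; correctness of this reformulation is unconditional, and the consistency property will only enter later when analysing Dijkstra's algorithm itself.
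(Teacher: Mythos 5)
Your proposal is correct and takes essentially the same route as the paper's (much terser) proof: the paper argues that for a single arc the fee computed in $G^T$ with fees added to the transferred amount equals the fee computed in $G$ with fees subtracted, and then asserts that this generalises along the path, which is exactly your amount-preservation plus optimality-transfer argument. You merely make explicit what the paper leaves implicit --- the vertex-by-vertex induction showing the amounts coincide under the transpose map, the preservation of the balance constraints, and the bijection step transferring minimality from $G^T$ to $G$.
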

\begin{proof}
Let $(v,v')$ be an arc in $G^T$ and $(v',v)$ be the corresponding transposed arc in $G$. Also, consider the path $(v,v')$ in $G^T$ and the path $(v',v)$ in $G$ where both paths contain a single arc. Consider the fee for the path $(v,v')$ in $G^T$ in the case where the amount transferred to $v$ is $a$ and fees are added to the amount transferred. This equals the fee for the path $(v',v)$ in $G$ where the amount transferred to $v$ is $a$ and fees are subtracted from the amount transferred. This result generalises from paths containing a single arc to paths containing multiple arcs and, in turn, proves the theorem.
\end{proof}

Algorithm~\ref{alg:dijkstra_alg} provides a pseudocode description of the above path planning process. It uses a map $c: V \rightarrow \mathbb{R}$ to model the total fee for the current lowest-fee path from $t$ to each vertex in $G^T$. Initially, all vertices are determined to have an infinite fee apart from $t$. A priority queue $Q$ is also used to order vertices by increasing values of $c(v)$. Initially, only $t$ is present in $Q$. As shown, the algorithm processes vertices in order of increasing fee by removing the minimum element from $Q$ at each step (lines 4 and 5). When the selected vertex $v$ is explored, the algorithm considers each neighbouring vertex $v'$ of $v$. The fee corresponding to the path from $t$ to $v'$ is the sum of the lowest-fee path from $t$ to $v$ plus the fee of the arc $(v,v')$. This is computed at line~9. If this fee is less than the fee of the current lowest-fee path from $t$ to $v'$ and the arc $(v,v')$ has a sufficient balance (line~10), then the maps and priority queue are updated on lines~11 and 12. If $v'$ is not yet present in $Q$, it is added to $Q$ with a fee of $c_{v'}$. The algorithm terminates either when $Q$ becomes empty (indicating that no path from $t$ to $s$ in $G^T$ exists) or when a lowest-fee path from $t$ to $s$ in $G^T$ is found (line~7). At termination, $p$ can be used to determine the lowest-fee path from $t$ to each vertex in $G^T$.

\begin{algorithm}
\caption{Variant of Dijkstra’s algorithm for LN}
\label{alg:dijkstra_alg}
\KwIn{A graph $G^T=(V,E)$ that equals the transpose of the LN graph $G$; a payment source $s \in V$; a payment destination $t \in V$; and a payment amount $a \in \mathbb{Z}^{>}$ to be transferred to $t$.}
\KwOut{A map $c: V \rightarrow \mathbb{R}$ that, for each vertex $v$, returns the fee for a lowest-fee path from $t$ to $v$ in $G^T$ where the amount transferred to $t$ is $a$; a map $p: V \rightarrow V$ that, for each vertex $v$, returns the previous vertex in a lowest-fee path from $t$ to $v$ in $G^T$.}
	For all $v\in V$, set $c(v)=\infty$\\
	Set $c(t)=0$ and insert the ordered pair $(c(t),t)$ into a priority queue $Q$\\
	\While{$|Q| > 0 $}{
		Let $(c(v), v)$ be the element in $Q$ with the minimum value for $c(v)$\\
		Remove the element $(c(v), v)$ from $Q$\\
		\If{$v=s$}{
			break
		}
		\ForAll{$v' \in \lbrace u : (v,u) \in E \rbrace$}{
			$c_{v'} = c(v) + f_{br}((v,v'), a)$ \\
			\If{$c_{v'} < c(v')$ $\land$ $c(v)+a \leq b(v,v')$}{
				Add the element $(c_{v'}, v')$ to $Q$ and, if present, remove element $(c(v'),v')$ from $Q$\\
				Set $c(v')=c_{v'}$ and set $p(v')=v$\\
			}
		}
	}
\end{algorithm}

Observe that Algorithm~\ref{alg:dijkstra_alg} is a variant of Dijkstra's original algorithm due to the presence of two features. Firstly, in this case, the computed cost of an arc $f_{br}((v,v'), a)$ (line~9) is dependent on the amount $a$ being transferred and is therefore not a constant. In contrast, Dijkstra's original algorithm assumes that all arc costs are non-negative constants. Secondly, in the above algorithm, an arc is only explored if it has a balance greater than or equal to the amount being transferred (line~10). On the other hand, Dijkstra's original algorithm contains no such condition. These differing features mean that existing correctness and computational complexity results for Dijkstra's original algorithm are insufficient here.

\section{Correctness and Computational Complexity Analysis}
\label{sec:formal_analysis}
In this section, we present proofs of the correctness and computational complexity of the LN path planning process in Algorithm~\ref{alg:dijkstra_alg}. 
We first introduce the definition of a \emph{consistent} fee map. This is a generalisation of the definition of a consistent travel time map introduced by Kaufman et al.~\cite{kaufman1993fastest}.
\begin{definition}
For a given LN graph $G=(V,E)$, a fee map $f: E \times \mathbb{Z}^{\geq} \rightarrow \mathbb{Z}^{\geq}$ is \emph{consistent} if for all $e \in E$, $a \in \mathbb{Z}$ and $a' \in \mathbb{Z}$ where $a \leq a'$, the following condition holds:
\begin{equation} \label{eq:consist}
	a + f(e,a) \leq a' + f(e,a').
\end{equation}
\end{definition}

Broadly speaking, this means that if a smaller amount is transferred along a given arc, the corresponding amount received will also be smaller. In the following theorem, we prove that the fee map $f_{br}$ defined in Equation~(\ref{eq:f_br}) and used within the LN is consistent.
\begin{theorem} \label{thm:fbr_consistent}
The fee map $f_{br}$ defined in Equation~\ref{eq:f_br} is a consistent fee map.
\end{theorem}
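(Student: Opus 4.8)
The plan is to unwind the definition of $f_{br}$ and reduce the consistency inequality to an elementary statement about multiplying by a non-negative constant. I would fix an arbitrary arc $e \in E$ and arbitrary integers $a \leq a'$, and substitute Equation~(\ref{eq:f_br}) into both sides of the consistency condition~(\ref{eq:consist}). This rewrites the left-hand side as
\begin{equation*}
a + f_{br}(e,a) = a + f_b(e) + f_r(e)\,a = \bigl(1 + f_r(e)\bigr)a + f_b(e),
\end{equation*}
and, symmetrically, the right-hand side as $\bigl(1 + f_r(e)\bigr)a' + f_b(e)$. Hence the inequality to be proved is equivalent to $\bigl(1 + f_r(e)\bigr)a \leq \bigl(1 + f_r(e)\bigr)a'$.

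Next I would invoke the only structural hypothesis available, namely that $f_r$ maps into $\mathbb{R}^{\geq}$, so that the scaling factor satisfies $1 + f_r(e) \geq 1 > 0$. Multiplying the assumption $a \leq a'$ by this non-negative real number preserves the ordering, and adding the common term $f_b(e)$ to both sides then gives $a + f_{br}(e,a) \leq a' + f_{br}(e,a')$, which is precisely Equation~(\ref{eq:consist}). Since $e$, $a$ and $a'$ were arbitrary, this shows $f_{br}$ is a consistent fee map.

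There is no genuine obstacle here — the argument is a one-line computation — but two points are worth flagging explicitly in the write-up. First, non-negativity of the fee rate is exactly what makes the scaling step order-preserving (a fee rate below $-1$ would flip the inequality), so this is the place where the hypothesis is really used. Second, the sign of $a$ itself is irrelevant: the definition permits $a \in \mathbb{Z}$ rather than only $a \in \mathbb{Z}^{\geq}$, and the proof goes through unchanged because it relies solely on monotonicity of multiplication by the non-negative factor $1 + f_r(e)$.
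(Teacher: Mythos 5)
Your proof is correct, but it takes a different route from the paper. You argue directly: substituting Equation~(\ref{eq:f_br}) into the consistency condition~(\ref{eq:consist}) reduces it to $(1+f_r(e))\,a \leq (1+f_r(e))\,a'$, which follows from $a \leq a'$ because the factor $1+f_r(e) \geq 1 > 0$. The paper instead invokes the characterisation from Kaufman et al.\ that a fee map is consistent if and only if its derivative with respect to $a$ is at least $-1$, then observes that $\partial f_{br}/\partial a = f_r(e) \geq 0$. The paper's argument is shorter and deliberately ties the result to the time-dependent-network literature that the rest of Section~\ref{sec:formal_analysis} leans on, but it rests on an external criterion stated for differentiable maps, which is slightly awkward when the domain is $\mathbb{Z}^{\geq}$. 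Your computation is self-contained, needs no citation, and works verbatim on the integer domain; your closing remarks---that the hypothesis really used is $f_r(e) \geq 0$ (with $f_r(e) \geq -1$ being the true breaking point, matching the derivative criterion) and that the sign of $a$ is irrelevant---are accurate and in fact make explicit what the paper's appeal to the derivative condition leaves implicit.
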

\begin{proof}
As noted by Kaufman et al.~\cite{kaufman1993fastest}, it follows from the definition of a consistent fee map that, a fee map is consistent if and only if its derivative with respect to $a$ is greater than or equal to $-1$. The derivative of $f_{br}(e,a) = f_b(e) + f_r(e) \times a$ with respect to $a$ is $f_r(e)$. Since this value is constrained to be greater than or equal to $0$, the result follows.
\end{proof}

Consider the fee map $f_{br}^\infty: E \times \mathbb{Z}^{\geq} \rightarrow \mathbb{Z}^{\geq}$ defined below. For a given arc $e$ and payment amount $a$, this map returns the value $f_{br}(e,a)$ if $e$ has a sufficient balance to transfer the amount $a$. Otherwise, this map returns $\infty$. This map is related to the concept of a barrier function in the field of optimisation \cite{nocedal2006nonlinear}. In the following lemma, we prove that the fee map $f_{br}^\infty$ is consistent. 
\begin{equation} \label{eq:f_br_infty}
f_{br}^\infty(e,a)=\begin{cases}
	f_{br}(e,a) & \text{if $a \leq b(e)$}.\\
	\infty & \text{otherwise}.
\end{cases}
\end{equation}

\begin{lemma} \label{thm:fbr_consistent_infty}
The fee map $f_{br}^\infty$ defined in Equation~\ref{eq:f_br_infty} is a consistent fee map.
\end{lemma}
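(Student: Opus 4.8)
The plan is to prove consistency of $f_{br}^\infty$ by a direct case analysis on whether the balance constraint is violated, reducing each case either to a trivial inequality involving $\infty$ or to the already-established consistency of $f_{br}$. In particular, I expect not to need the derivative characterisation used in Theorem~\ref{thm:fbr_consistent}; that fact can be invoked as a black box for the ``finite'' case.

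Fix an arc $e \in E$ and amounts $a, a' \in \mathbb{Z}^{\geq}$ with $a \le a'$; the goal is the consistency inequality~(\ref{eq:consist}) for $f = f_{br}^\infty$. First I would split on whether $a' \le b(e)$. If $a' > b(e)$, then by definition~(\ref{eq:f_br_infty}) we have $f_{br}^\infty(e,a') = \infty$, so the right-hand side $a' + f_{br}^\infty(e,a')$ equals $\infty$ and the inequality holds regardless of the left-hand side. If instead $a' \le b(e)$, then since $a \le a'$ we also have $a \le b(e)$ by transitivity; hence both quantities are finite, with $f_{br}^\infty(e,a) = f_{br}(e,a)$ and $f_{br}^\infty(e,a') = f_{br}(e,a')$, and the required inequality $a + f_{br}(e,a) \le a' + f_{br}(e,a')$ is precisely the consistency of $f_{br}$, which holds by Theorem~\ref{thm:fbr_consistent}. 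As these two cases are exhaustive, consistency of $f_{br}^\infty$ follows.

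There is essentially no technical obstacle here; the only point requiring care is the direction of the implication in the finite case. It is crucial that $a' \le b(e)$ together with $a \le a'$ forces $a \le b(e)$, so that the barrier value $\infty$ can only ever appear on the \emph{larger} argument and never on the smaller one. In other words, the feasible set $\{a : a \le b(e)\}$ is downward closed, and this monotonicity of feasibility in the transferred amount is exactly what makes the barrier term compatible with consistency; the analogous claim would fail for a constraint that excluded small amounts rather than large ones.
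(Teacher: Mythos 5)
Your proof is correct and follows essentially the same route as the paper's: a two-case analysis on the balance constraint, with the infeasible case handled by a trivial $\infty$ comparison and the feasible case reduced to Theorem~\ref{thm:fbr_consistent}. The only (minor) difference is that you split on whether $a' \le b(e)$ while the paper splits on whether $a \le b(e)$; your choice is arguably the cleaner one, since it guarantees both sides are finite in the case that invokes Theorem~\ref{thm:fbr_consistent}, whereas the paper's first case quietly allows $f_{br}^\infty(e,a')=\infty$ on the right-hand side.
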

\begin{proof}
To prove that the map $f_{br}^\infty$ is consistent we must prove that the statement $a + f_{br}^\infty(e,a) \leq a' + f_{br}^\infty(e,a')$ is true for all $e \in E$, $a \in \mathbb{Z}$ and $a' \in \mathbb{Z}$ where $a \leq a'$. The space of all variable combinations can be partitioned into two sets: $a \leq b(e)$ and $a > b(e)$. On the first set $a \leq b(e)$, $f_{br}^\infty(e,a)$ evaluates to $f_{br}(e,a)$. In this case, from Theorem~\ref{thm:fbr_consistent}, the statement $a + f_{br}^\infty(e,a) \leq a' + f_{br}^\infty(e,a')$ is true. On the second set $a > b(e)$, $f_{br}^\infty(e,a)$ evaluates to $\infty$. In this case, the statement $a + f_{br}^\infty(e,a) \leq a' + f_{br}^\infty(e,a')$ evaluates to $\infty \leq \infty$ and is also true.
\end{proof}

Toward proving the correctness and computational complexity of Algorithm~\ref{alg:dijkstra_alg}, we first show that the problem of determining a lowest-fee path in an LN with fee map $f_{br}^\infty$ is equivalent to the problem of determining a lowest time path in a time-dependent network (TDN). A TDN is a type of network where the time it takes to traverse a given connection is a function of the time this action is performed \cite{kaufman1993fastest, gendreau2015time}. For example, a street network can be modelled as a TDN, since the time it takes an agent to traverse a given street is a function of the time this action is performed. In general, a TDN is modelled as a directed graph, where arcs model transportation connections, plus a travel time map from arc and time pairs to travel times. Consider a TDN $G=(V,E)$ with travel time map $f_d: E \times \mathbb{Z}^{\geq} \rightarrow \mathbb{Z}^{\geq}$. Now consider the path $e_1, e_2$ in this TDN where $e_1 =(v_i, v_k)$, $e_2=(v_k, v_j)$ and the agent in question departs $v_i$ at time $a$. The agent arrives at $v_k$ at time $a + f_d(e_1, a)$. The agent in turn arrives at $v_j$ at time $a + f_d(e_1, a) + f_d(e_2, a + f_d(e_1, a))$. This computation is analogous to the LN payment computations described earlier in the previous section. We formalise this relationship in the following lemma. 

\begin{lemma} \label{thm:tdtn_ln_equ}
Consider a LN graph $G=(V,E)$ with fee map $f_{br}^\infty: E \times \mathbb{Z}^{\geq} \rightarrow \mathbb{Z}^{\geq}$ and a TDN graph $G'=(V',E')$ with time map $f_d: E' \times \mathbb{Z}^{\geq} \rightarrow \mathbb{Z}^{\geq}$. The problem of determining a lowest-fee path in $G$ from $s$ to $t$ (where the amount transferred to $t$ is $a$) is equivalent to the problem of determining a minimum time path in $G'$ from $t'$ to $s'$ (where the time the agent departs $t'$ is $a'$).
\end{lemma}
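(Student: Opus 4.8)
The plan is to write down an explicit TDN $G'$ built from the LN $G$ and then to observe that the cost recursion on the TDN side is the LN payment recurrence~(\ref{eq:recurrence_relation}) verbatim, up to a relabelling of variables, so that optimal solutions of the two problems are in bijection. Concretely, I would take $G'=G^T$ (hence $V'=V$ and $E'=\{(v',v):(v,v')\in E\}$), set the TDN origin to $t'=t$, the TDN destination to $s'=s$, and the departure time to $a'=a$, and I would define the travel-time map by $f_d((v',v),\tau)=f_{br}^\infty((v,v'),\tau)$ for each transposed arc $(v',v)\in E'$; that is, the arc of $G'$ carries, as its time-dependent travel time, the LN fee function of the arc of $G$ from which it was obtained. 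Because $f_{br}^\infty$ may equal $\infty$, I would work with the standard mild extension of TDNs in which an arc may simply be unavailable when the clock reaches a bad value (equivalently, the balance constraint becomes a time window on the arc).

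With the construction fixed, the argument is bookkeeping. I would first reuse the transposition bijection from Theorem~\ref{thm:reverse_search}: an $s$-$t$ path $P=(v_1,v_2),\dots,(v_n,v_{n+1})$ in $G$ (with $v_1=s$, $v_{n+1}=t$) corresponds to the reversed path $P'=(v_{n+1},v_n),\dots,(v_2,v_1)$ from $t'$ to $s'$ in $G'$, and this is a bijection between the two path sets. I would then prove by a short downward induction that, if the agent leaves $t'$ at time $a'=a$ and follows $P'$, its arrival time at $v_k$ is exactly the amount $a_k$ produced by the LN recurrence~(\ref{eq:recurrence_relation}) along $P$ with $a_{n+1}=a$ and $f_{br}$ replaced by $f_{br}^\infty$: the inductive step is simply that traversing $(v_{k+1},v_k)$ in $G'$ advances the clock by $f_d((v_{k+1},v_k),a_{k+1})=f_{br}^\infty((v_k,v_{k+1}),a_{k+1})$, which is precisely the increment $a_k-a_{k+1}$ dictated by the recurrence. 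Hence the arrival time at $s'$ equals $a_s$; it is finite iff every arc of $P$ has enough balance for the amount it must forward, i.e. iff $P$ is feasible; and when finite the total LN fee of $P$ equals $a_s-a$, which is exactly the arrival time of $P'$ at $s'$ minus $a'$. Since $a'=a$ does not depend on the path, an $s$-$t$ path is fee-minimal in $G$ iff its transpose is time-minimal in $G'$, with infeasible LN paths matched to infinite-time TDN paths, which is the asserted equivalence. I would also remark, though it is not needed for this lemma, that by Lemma~\ref{thm:fbr_consistent_infty} the map $f_d$ so constructed is consistent, so $G'$ is in fact a FIFO TDN, which is exactly the structure the subsequent complexity results rely on.

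There is no deep obstacle here: the substance of the lemma is the recognition that an LN payment is a time-dependent shortest path in disguise, so the only points needing care are (i) orienting the reduction correctly --- why $G'$ is the transpose $G^T$ and why the TDN problem runs from $t'$ to $s'$ rather than from $s'$ to $t'$ --- which is forced by the fact that LN amounts accumulate backwards from the destination; and (ii) handling infeasible paths cleanly, which is precisely the purpose of the barrier function $f_{br}^\infty$, since an arc with insufficient balance becomes an arc of infinite travel time and such paths automatically drop out of the minimisation.
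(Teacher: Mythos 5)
Your construction is exactly the paper's: invoke Theorem~\ref{thm:reverse_search} to pass from $s$--$t$ paths in $G$ to $t$--$s$ paths in $G^T$, then identify $G^T$ with $G'$, $t$ with $t'$, $s$ with $s'$, $a$ with $a'$, and $f_{br}^\infty$ with $f_d$, so the two problems coincide. Your added induction matching arrival times to the recurrence~(\ref{eq:recurrence_relation}) and your explicit treatment of infeasible paths via infinite travel times merely spell out the identification that the paper states in one line, so the proposal is correct and follows essentially the same route.
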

\begin{proof}
The problem of determining a lowest-fee path in $G$ from vertex $s$ to vertex $t$ where the amount transferred to $t$ is $a$ is equivalent to the problem of determining a lowest-fee path in $G^T$ from vertex $t$ to vertex $s$ where the amount transferred to $t$ is $a$ (see Theorem~\ref{thm:reverse_search}). By equating $G^T$ with $G'$, $s$ with $s'$, $t$ with $t'$, $a$ with $a'$ and $f_{br}^\infty$ with $f_d$, the path planning problems in the graphs $G'$ and $G^T$ are equivalent. In turn, the path planning problems in the graphs $G$ and $G'$ are equivalent.
\end{proof}

The equivalence established in Lemma~\ref{thm:tdtn_ln_equ} means that several results for path planning in TDNs generalise to path planning in an LN. This is a useful property, because it allows us to leverage existing research on the topic of TDNs that has been developed over several decades \cite{kaufman1993fastest, gendreau2015time}. Toward this goal, we define an algorithm denoted by Algorithm~$1'$. This algorithm modifies Algorithm~\ref{alg:dijkstra_alg} by replacing the fee map $f_{br}$ with $f_{br}^\infty$ and modifying the condition used to determine if a given arc is explored in the search for a lowest fee path. We subsequently prove the correctness and computational complexity of Algorithm~$1'$ before generalising these results to Algorithm~$1$.
\begin{definition}
Replace in Algorithm~\ref{alg:dijkstra_alg} the fee map $f_{br}$ with $f_{br}^\infty$ on line~9 and replace the statement $c_{v'} < c(v')$ $\land$ $c(v)+a \leq b(v,v')$ with $c_{v'} < c(v')$ on line~10. We denote this modified algorithm as Algorithm~$1'$. 
\end{definition}

\begin{lemma} \label{thm:uni_correctness}
Algorithm~$1'$ for computing a lowest-fee path in the LN graph $G=(V,E)$ with fee map $f_{br}^\infty$ is correct.
\end{lemma}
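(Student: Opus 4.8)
## Proof Proposal

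The plan is to reduce the correctness of Algorithm~$1'$ to the known correctness of a Dijkstra-style algorithm for the minimum-time path problem in a time-dependent network (TDN) with a consistent travel-time map. By Lemma~\ref{thm:tdtn_ln_equ}, computing a lowest-fee path in $G$ with fee map $f_{br}^\infty$ is equivalent to computing a minimum-time path in the TDN $G' = G^T$ with travel-time map $f_d = f_{br}^\infty$; and by Lemma~\ref{thm:fbr_consistent_infty}, this map is consistent. So the first step is to observe that Algorithm~$1'$, run on $G^T$, is precisely the standard time-dependent Dijkstra algorithm: line~9 computes the arrival time at $v'$ when departing $v$ at the accumulated time $c(v)$, since $c_{v'} = c(v) + f_{br}^\infty((v,v'),a)$ — here one must be a little careful, because line~9 uses $f_{br}((v,v'),a)$ with the \emph{original} payment amount $a$ rather than the accumulated amount; I would note that after the substitution of $f_{br}^\infty$ the cost of exploring an arc is still evaluated at the fixed amount $a$, so the algorithm is in fact solving a \emph{static} non-negative-weight instance, which is even simpler. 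Actually the cleanest route is to invoke the consistency property directly rather than citing TDN results.

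Concretely, I would argue as follows. First, establish a loop invariant: at the moment a vertex $v$ is extracted from $Q$ (line~4) and it is not $s$, the value $c(v)$ equals the fee of a lowest-fee path from $t$ to $v$ in $G^T$ (equivalently, the least amount that must be handed to $v$ so that $a$ reaches $t$ along a feasible path, minus $a$). This is the textbook Dijkstra invariant, and its proof is by induction on the order of extraction, using two facts: (i) edge costs $f_{br}^\infty((v,v'),a)$ are non-negative (they are either $f_b + f_r a \ge 0$ or $\infty$), so once a vertex is finalized its label cannot improve; and (ii) consistency of $f_{br}^\infty$ is what guarantees that extending a shortest path by one arc is optimal — i.e., that there is no benefit to reaching an intermediate vertex via a higher-fee (hence higher accumulated amount) path, since by Equation~(\ref{eq:consist}) a higher accumulated amount only ever leads to a higher-or-equal amount downstream. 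Second, since Algorithm~$1'$ terminates the moment $s$ is extracted (line~7), and by the invariant $c(s)$ is then the optimal fee, and the map $p$ lets us reconstruct a witnessing path, the algorithm is correct. I would also handle the degenerate cases: if no feasible path exists, $s$ is never extracted with finite $c(s)$ (or $Q$ empties), and the algorithm correctly reports $c(s) = \infty$.

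The main obstacle is the subtle mismatch between the recurrence in Equation~(\ref{eq:recurrence_relation_dijkstra_2}), where the fee of an arc depends on the accumulated amount $a_{v'}$ that has built up along the suffix of the path, and line~9 of the algorithm, which charges $f_{br}((v,v'),a)$ using only the base amount $a$. I need to argue that this is harmless. One way: observe that because $f_r \ge 0$, charging the fee at the smaller base amount $a$ underestimates the true arc fee, but then I must check the algorithm still returns a genuinely feasible, genuinely lowest-fee path. The honest resolution is that the $c$-values in Algorithm~$1'$ are not the literal fees of Equation~(\ref{eq:recurrence_relation}); rather, they are a surrogate cost that is monotone along paths, and minimizing this surrogate coincides with minimizing the true cost \emph{because} of consistency — this is exactly the structural role consistency plays in the TDN literature of Kaufman et al. I would therefore frame the proof around the surrogate/true-cost correspondence, citing Lemma~\ref{thm:tdtn_ln_equ} to transfer the minimum-time-path correctness argument wholesale, and spend the bulk of the writeup making the Dijkstra invariant precise for a consistent map. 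Non-negativity of arc costs plus consistency are the two hypotheses that make the classical argument go through; everything else is bookkeeping over the priority queue operations on lines~4–12.
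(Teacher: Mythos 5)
Your opening and closing moves are exactly the paper's proof: combine Lemma~\ref{thm:fbr_consistent_infty} (consistency of $f_{br}^\infty$) with Lemma~\ref{thm:tdtn_ln_equ} (equivalence with the minimum-time-path problem in a TDN) and then invoke Kaufman et al.'s Theorem~4, which gives correctness of this Dijkstra variant on a TDN with a consistent travel-time map; the paper's proof consists of precisely those three sentences, so on that route you are done. The middle of your proposal sketches a second, self-contained route --- a direct Dijkstra loop invariant proved by induction on extraction order, using non-negativity of the arc costs together with consistency --- which is a legitimate and arguably more informative alternative (it would make the lemma independent of the external TDN reference), but as written it is only a sketch: the key step, that consistency rules out any advantage from reaching an intermediate vertex with a larger accumulated amount, is asserted rather than carried through the induction. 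Your worry about line~9 is well spotted, and it is the one point that genuinely needs pinning down: for the correspondence of Lemma~\ref{thm:tdtn_ln_equ} to make Algorithm~$1'$ literally Kaufman et al.'s time-dependent Dijkstra, the fee on line~9 must be evaluated at the accumulated amount $c(v)+a$, as the worked example and Equation~(\ref{eq:recurrence_relation_dijkstra_2}) indicate; read that way no surrogate-cost argument is needed. If one instead reads line~9 as charging the fixed base amount $a$, the algorithm minimizes a linearized objective rather than the true compounded fee, and the lemma would not follow from the cited TDN result --- so your instinct to flag this, rather than your proposed ``surrogate cost'' workaround, is the right resolution.
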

\begin{proof}
Algorithm~$1'$ is a variant of Dijkstra's algorithm. Kaufman et al.~\cite{kaufman1993fastest} prove that using this this type of algorithm variant to determine a minimum time path in a TDN with a consistent travel time map is correct (see Theorem~4 in the original paper by Kaufman et al.). Lemma~\ref{thm:fbr_consistent_infty} proves that the fee map $f_{br}^\infty$ is a consistent fee map. Lemma~\ref{thm:tdtn_ln_equ} proves that the problem of determining a lowest-fee path in an LN with fee map $f_{br}^\infty$ is equivalent to the problem of determining a minimum time path in a TDN with travel time map $f_d$. Combining the above three results, implies that Algorithm~$1'$ is correct.
\end{proof}

\begin{lemma} \label{thm:unidirectional_complexity}
The computational complexity of applying the path planning algorithm Algorithm~$1'$ to the LN graph $G=(V,E)$ with fee map $f_{br}^\infty$ is $O(|E| + |V| \log |V|)$.
\end{lemma}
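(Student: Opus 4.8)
The plan is to show that Algorithm~$1'$ performs essentially the same sequence of priority-queue operations as a Fibonacci-heap implementation of Dijkstra's algorithm, so that the classical $O(|E| + |V|\log|V|)$ bound transfers, the only new points being that each arc-cost evaluation on line~9 takes constant time and that each vertex is removed from $Q$ at most once.

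First I would note that the payment amount $a$ is a fixed input and never changes during execution. Hence evaluating $f_{br}^\infty((v,v'),a) = f_b(v,v') + f_r(v,v')\cdot a$, together with the balance comparison $a \le b(v,v')$ that decides whether this value is finite, costs $O(1)$ time --- exactly as a constant arc weight would in the original algorithm. The comparison $c_{v'} < c(v')$ on line~10 and the map assignments on line~12 are likewise $O(1)$, and the update to $Q$ on line~11 is a \emph{decrease-key} operation (or an \emph{insert} when $v'$ is not yet present), because the pseudocode maintains the invariant that $Q$ holds at most one ordered pair per vertex.

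Next I would argue that every vertex is removed from $Q$ on line~5 at most once; this is the step where consistency is actually used. By Lemma~\ref{thm:fbr_consistent_infty} together with Lemma~\ref{thm:uni_correctness}, Algorithm~$1'$ is correct, and the invariant underlying the Kaufman et al.\ argument is that vertices are removed from $Q$ in non-decreasing order of their final $c$-value and that $c(v)$ is already optimal when $v$ is removed. Since an optimal $c(v)$ can never subsequently be lowered, the condition $c_{v'} < c(v')$ on line~10 is never satisfied for a vertex $v'$ that has already been removed, so no vertex is re-inserted into $Q$ after removal. Therefore the main while loop iterates at most $|V|$ times and performs at most $|V|$ extract-min operations, while over the whole run the inner for loop inspects each arc $(v,v') \in E$ at most once and hence performs at most $|E|$ decrease-key or insert operations in total.

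Finally I would total the costs: with a Fibonacci heap the at most $|V|$ extract-min operations cost $O(\log|V|)$ amortised each, giving $O(|V|\log|V|)$; the at most $|E|$ decrease-key and insert operations cost $O(1)$ amortised each, giving $O(|E|)$; and the initialisation on lines~1--2 is $O(|V|)$. Summing yields $O(|E| + |V|\log|V|)$. I expect the ``removed at most once'' step to be the main obstacle: it is tempting to treat it as routine, but the arc costs here are not the fixed non-negative constants assumed by the textbook analysis, so this property genuinely rests on the consistency of $f_{br}^\infty$ and on the correctness result inherited from Kaufman et al., rather than on Dijkstra's original argument.
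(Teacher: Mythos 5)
Your proposal is correct and in substance follows the same route as the paper: both transfer the classical Fibonacci-heap analysis of Dijkstra's algorithm to Algorithm~$1'$, with Kaufman et al.\ supplying the justification that the non-constant but consistent cost map $f_{br}^\infty$ does not change the accounting. The only difference is one of explicitness --- the paper simply cites Kaufman et al.'s Theorem~4 for the claim that the time-dependent variant has the same complexity as constant-weight Dijkstra and then invokes the $O(|E| + |V|\log|V|)$ Fibonacci-heap bound, whereas you re-derive the operation counts ($O(1)$ cost evaluations, at most $|V|$ extract-min and $|E|$ decrease-key/insert operations) and isolate exactly where consistency is needed, namely to guarantee that no vertex is re-inserted into $Q$ after removal.
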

\begin{proof}
It was proved by Kaufman et al. (see Theorem~4 in ~\cite{kaufman1993fastest}) that applying Dijkstra's algorithm to a TDN has equal computational complexity to applying Dijkstra's algorithm to a TDN where each arc has a constant and non-negative travel time. Hence, computational complexity results for applying Dijkstra's algorithm to graphs with constant and non-negative weights generalise to Algorithm~$1'$. The algorithm is applied to the transpose of the graph $G$ which contains an equal number of vertices and arcs. Using a Fibonacci heap data structure for the priority queue $Q$, an application of Dijkstra's algorithm to this graph has computational complexity $O(|E| + |V| \log |V|)$.
\end{proof}

The above results prove the correctness and computational complexity respectively of Algorithm~$1'$. If the following, we prove that these results generalise to Algorithm~\ref{alg:dijkstra_alg}. These proofs are based on the insight that Algorithm~\ref{alg:dijkstra_alg} prunes the search space of Algorithm~$1'$ to only consider feasible payment paths.

\begin{theorem}
Algorithm~\ref{alg:dijkstra_alg} for computing a lowest-fee path in the LN graph $G=(V,E)$ with fee map $f_{br}$ is correct.
\end{theorem}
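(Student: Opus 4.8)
The plan is to show that Algorithm~\ref{alg:dijkstra_alg} and Algorithm~$1'$ behave identically on every input, and then to combine this with the correctness of Algorithm~$1'$ already established in Lemma~\ref{thm:uni_correctness}. The conceptual bridge is that, under the barrier fee map $f_{br}^\infty$ of Equation~(\ref{eq:f_br_infty}), a lowest-fee path is exactly a lowest-fee \emph{feasible} path under $f_{br}$, so the object returned by Algorithm~$1'$ is precisely the object sought by the path planning problem.

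First I would make the link between $f_{br}$ and $f_{br}^\infty$ explicit. For any arc $e$ and any amount $a'$, either $a' \le b(e)$, in which case $f_{br}^\infty(e,a') = f_{br}(e,a')$, or $a' > b(e)$, in which case $f_{br}^\infty(e,a') = \infty$ and $e$ cannot forward an amount of that size. Unrolling the recurrence of Equation~(\ref{eq:recurrence_relation}), now read with $f_{br}^\infty$ in place of $f_{br}$, then shows that an $s$--$t$ path has finite $f_{br}^\infty$-fee if and only if each of its balance constraints is met, and that in this case its $f_{br}^\infty$-fee coincides with its $f_{br}$-fee; every infeasible path has infinite $f_{br}^\infty$-fee. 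Hence a lowest-$f_{br}^\infty$-fee $s$--$t$ path of finite fee is an optimal solution of the path planning problem, and if every $s$--$t$ path has infinite $f_{br}^\infty$-fee then no feasible path exists.

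Next I would argue, by induction on the number of completed iterations of the \texttt{while} loop, that Algorithm~\ref{alg:dijkstra_alg} and Algorithm~$1'$ maintain the same maps $c$ and $p$ and the same priority queue $Q$ throughout. The algorithms differ only on lines~9 and~10. If the arc $(v,v')$ under relaxation can carry the amount to be forwarded along it, then the balance conjunct on line~10 of Algorithm~\ref{alg:dijkstra_alg} holds and, by the previous step, line~9 assigns the same finite value $c_{v'}$ in both algorithms, so both perform the same update; if that arc cannot carry the amount, then Algorithm~$1'$ computes $c_{v'} = \infty$, which never improves on $c(v')$ and so changes nothing, while Algorithm~\ref{alg:dijkstra_alg} fails the balance conjunct on line~10 and likewise changes nothing. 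Since their states agree at every step, the two algorithms also extract the same vertex on line~4, break on line~7 under the same condition, and terminate together, so they return identical $c$ and $p$.

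Finally I would assemble the pieces: by Lemma~\ref{thm:uni_correctness}, Algorithm~$1'$ correctly returns, for each $v$, the fee and predecessor of a lowest-$f_{br}^\infty$-fee path from $t$ to $v$ in $G^T$; by Theorem~\ref{thm:reverse_search} this transposes to a lowest-fee path from $s$ to $t$ in $G$; and by the first step such a path, when $c(s) < \infty$, is a lowest-fee feasible path, while $c(s) = \infty$ certifies that none exists. As Algorithm~\ref{alg:dijkstra_alg} returns the very same maps, it is correct. The main obstacle I expect is the inductive lock-step argument, and within it verifying that the balance test $c(v)+a \le b(v,v')$ on line~10 fails in precisely the cases where $f_{br}^\infty$ evaluates to $\infty$ on line~9 — this hinges on identifying $c(v)+a$ as the amount that must actually be forwarded along $(v,v')$, which follows from the recurrence of Equation~(\ref{eq:recurrence_relation}); once this is in place, ``the arc is feasible'' and ``$f_{br}^\infty$ is finite'' coincide and the remaining priority-queue bookkeeping is routine.
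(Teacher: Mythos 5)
Your proposal is correct and takes essentially the same route as the paper: both reduce the correctness of Algorithm~\ref{alg:dijkstra_alg} to that of Algorithm~$1'$ (Lemma~\ref{thm:uni_correctness}) by arguing the two algorithms compute equivalent results, using the fact that under $f_{br}^\infty$ feasible paths have finite fee equal to their $f_{br}$-fee while balance-violating relaxations correspond to an infinite fee and hence never improve $c$. Your lock-step induction on the maps $c$, $p$ and the queue $Q$ is just a more explicit rendering of the paper's ``pruned search space'' equivalence argument, and your identification of $c(v)+a$ as the amount actually forwarded matches the paper's intended reading of lines~9--10.
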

\begin{proof}
In the following, we prove that Algorithms~\ref{alg:dijkstra_alg} and $1'$ compute equivalent results. The search space of Algorithm~\ref{alg:dijkstra_alg} is a pruned search space relative to Algorithm~$1'$. Specifically, the search space of Algorithm~\ref{alg:dijkstra_alg} is the space of feasible payment paths. The search space of Algorithm~$1'$ is the space of both feasible and infeasible payment paths. We assume that the range of $f_{br}$ is bounded and therefore all feasible payment paths have a finite fee. On the other hand, all infeasible payment paths will contain an arc where $f_{br}^\infty$ evaluates to $\infty$ and therefore have an infinite fee. If a feasible lowest fee path exists, since Algorithm~$1'$ is correct from Lemma~\ref{thm:uni_correctness}, Algorithm~$1'$ will return it. On the other hand, since it is feasible, this payment path will exist in the pruned search space and will also be returned by Algorithm~\ref{alg:dijkstra_alg}. If no feasible payment path exists, Algorithm~$1'$ will return an infeasible payment path with an infinite fee. On the other hand, Algorithm~\ref{alg:dijkstra_alg} will not return a payment path. If we equate returning an infeasible payment path with not returning a payment path, then Algorithms~\ref{alg:dijkstra_alg} and $1'$ compute equivalent results. Therefore, the correctness of Algorithm~$1'$ implies the correctness of Algorithm~\ref{alg:dijkstra_alg}.
\end{proof}

\begin{theorem}
The computational complexity of applying the path planning algorithm presented in Algorithm~\ref{alg:dijkstra_alg} to the LN graph $G=(V,E)$ with fee map $f_{br}$ is $O(|E| + |V| \log |V|)$.
\end{theorem}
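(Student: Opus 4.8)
The plan is to piggyback on Lemma~\ref{thm:unidirectional_complexity}, which already bounds the running time of Algorithm~$1'$ by $O(|E|+|V|\log|V|)$ when a Fibonacci heap is used for the priority queue $Q$. The crucial observation is that Algorithm~\ref{alg:dijkstra_alg} is obtained from Algorithm~$1'$ by two changes, each of which is asymptotically free: on line~9 it evaluates $f_{br}((v,v'),a)$ instead of $f_{br}^\infty((v,v'),a)$ --- both computable in $O(1)$ time from $f_b$, $f_r$, and at most one comparison against $b(e)$ --- and on line~10 it strengthens the relaxation test by the conjunct $c(v)+a\le b(v,v')$, which costs one extra $O(1)$ lookup of $b(v,v')$ per scanned arc. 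I would therefore argue that Algorithm~\ref{alg:dijkstra_alg} performs the same kinds of operations as Algorithm~$1'$, in the same amounts up to lower-order terms, and hence inherits the same bound.

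Concretely, the key steps are: (i) the while loop on lines~3--12 executes at most once per vertex removed from $Q$; since, by the standard Dijkstra monotonicity argument --- the key of each extracted vertex is no smaller than that of any previously extracted vertex, because the arc costs $f_{br}(\cdot,a)$ are non-negative and the balance test can only suppress updates, never create them --- no vertex is ever re-inserted after extraction, so there are at most $|V|$ iterations and at most $|V|$ \textsc{extract-min} operations; (ii) each iteration scans the out-neighbours of the extracted vertex exactly once, so there are at most $|E|$ arc scans in total, each doing $O(1)$ work (the fee evaluation on line~9, the strengthened test on line~10, and at most one \textsc{insert}/\textsc{decrease-key} on line~11); (iii) with a Fibonacci heap the $|V|$ \textsc{extract-min} operations cost $O(|V|\log|V|)$ and the at most $|E|$ \textsc{insert}/\textsc{decrease-key} operations cost $O(|E|)$, all amortised, while the early break on line~7 can only reduce the work. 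Summing gives the claimed $O(|E|+|V|\log|V|)$ bound on $G^T$, which has the same numbers of vertices and arcs as $G$.

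The step I expect to need the most care is point~(i): one must check that the extra pruning condition on line~10 does not break the invariant that a vertex, once extracted, is never put back into $Q$ --- intuitively clear, since removing candidate relaxations can only slow the growth of the $c$-values, but it is the one place where the argument is not purely mechanical. An equally short alternative would be to sidestep Algorithm~$1'$ altogether and invoke Kaufman et al.~\cite{kaufman1993fastest} directly: via Lemma~\ref{thm:tdtn_ln_equ}, Algorithm~\ref{alg:dijkstra_alg} is a label-setting Dijkstra search on a time-dependent network, whose running time they show matches that of static Dijkstra, namely $O(|E|+|V|\log|V|)$. Either way, no appeal to the consistency property is required for the complexity bound --- consistency was needed only to establish correctness.
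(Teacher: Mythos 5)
Your proposal is correct, and its skeleton --- reduce to the bound already established for Algorithm~$1'$ in Lemma~\ref{thm:unidirectional_complexity} --- matches the paper's. The difference is in how the reduction is justified. The paper argues at the level of search spaces: Algorithm~\ref{alg:dijkstra_alg} explores a \emph{pruned} subset of the paths explored by Algorithm~$1'$, so it does no more work, and in the worst case (all balances sufficient) the two coincide, so the bound is inherited from Lemma~\ref{thm:unidirectional_complexity}. You instead give a direct operation count: at most $|V|$ extractions, at most $|E|$ arc scans each costing $O(1)$ (the fee evaluation and the balance test are both constant time), and the Fibonacci-heap amortised costs. Your version is more self-contained and, importantly, you correctly flag and discharge the one step the paper glosses over --- that the extra balance conjunct on line~10 cannot break the label-setting invariant, since it only suppresses relaxations and the arc costs $f_{br}(\cdot,a)$ are non-negative constants for the fixed amount $a$. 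Your closing observation is also sharper than the paper's presentation: the complexity bound genuinely does not require consistency of the fee map (that is needed only for correctness), whereas the paper reaches the bound through Lemma~\ref{thm:unidirectional_complexity}, which itself cites Kaufman et al.'s consistency-based Theorem~4, making the dependence look stronger than it is. Either route yields the stated $O(|E|+|V|\log|V|)$ bound; yours buys a cleaner separation between the correctness and complexity arguments, while the paper's buys brevity by reusing the Algorithm~$1'$ machinery.
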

\begin{proof}
Since Algorithm~\ref{alg:dijkstra_alg} searches a pruned search space relative to Algorithm~$1'$, it will have computational complexity less than or equal to Algorithm~$1'$. We assume that payment amounts and the range of $f_{br}$ are both bounded. Therefore, the amount that one may wish to forward along an arc is bounded. In a worst case where each arc has a balance greater than or equal to this bound, the search spaces of Algorithm~\ref{alg:dijkstra_alg} and Algorithm~$1'$ are equal. Hence Algorithm~\ref{alg:dijkstra_alg} and Algorithm~$1'$ have equal computational complexity. Given this, the proof follows from Lemma~\ref{thm:unidirectional_complexity}.
\end{proof}

The above proofs of correctness and computational complexity for Algorithm~\ref{alg:dijkstra_alg} assume that the fee map in question is consistent. If this were not the case, the algorithm would not be correct. To demonstrate this, consider the LN graph $G$ and corresponding transpose graph $G^T$ displayed in Figures~\ref{fig:consistent_map_eg_a} and \ref{fig:consistent_map_eg_b} respectively. Furthermore, let the fee map $f_{br}$ corresponding to the transpose graph $G^T$ have the properties $f_{br}((t,i), 100)=10$, $f_{br}((i,j), 110)=10$, $f_{br}((j,s), 120)=5$, $f_{br}((t,j), 100)=10$ and $f_{br}((j,s), 110)=20$. This fee map is not consistent because the following inequality is not satisfied $110 + f_{br}((j,s), 110)= 110+20=130 \allowbreak \leq \allowbreak 120 + f_{br}((j,s), 120) = 120 + 5 = 125$.

Consider the problem of making a payment from $s$ to $t$ in $G$ where the amount transferred to $t$ is 100. Applying Algorithm~\ref{alg:dijkstra_alg} to $G^T$ to determine a path from $t$ to $s$ returns the path $(t,j), (j,s)$ which has a fee of $f_{br}((t,j), 100) \allowbreak + f_{br}((j,s), 110) \allowbreak = \allowbreak 10 + 20 = \allowbreak 30$. However, the lowest fee path from $t$ to $s$ is in fact $(t,i), (i,j), (j,s)$, which has a fee of $f_{br}((t,i), 100) \allowbreak + f_{br}((i,j), 110) \allowbreak + f_{br}((j,s), 120) \allowbreak = 10 + 10 + 5 \allowbreak = 25$. This is because when the algorithm processes the arc $(t,j)$ it will insert the element $(110, j)$ into the queue $Q$ and consider the path $(t,j)$ as the lowest fee path from $t$ to $j$. When the algorithm later processes the arc $(i,j)$ it will observe that $120 \nleq 110$ and, consequently will not insert the element $(120, i)$ into the queue $Q$ nor consider the path $(t,i), (i,j)$ as the lowest fee path from $t$ to $j$. Hence, Algorithm~\ref{alg:dijkstra_alg} does not return the lowest fee path and is therefore incorrect in this case.

\begin{figure}
\begin{center}
\subfigure[]{\includegraphics[height=2.4cm]{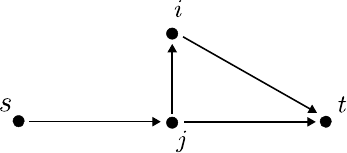}
\label{fig:consistent_map_eg_a}}
\hspace{1cm}
\subfigure[]{\includegraphics[height=2.4cm]{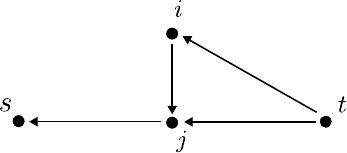}
\label{fig:consistent_map_eg_b}}
\caption{An LN graph $G$ and corresponding transpose graph $G^T$ displayed in (a) and (b) respectively. The fee map $f_{br}$ corresponding to $G^T$ has the properties $f_{br}((t,i), 100)=10$, $f_{br}((i,j), 110)=10$, $f_{br}((j,s), 120)=5$, $f_{br}((t,j), 100)=10$ and $f_{br}((j,s), 110)=20$.}
\end{center}
\end{figure}

If the fee map for a LN graph is an arbitrary map that is not consistent, the corresponding path planning can be NP-hard. We present a proof of this fact in the following theorem.
\begin{theorem}
Consider a LN graph $G=(V,E)$ with fee map $f: E \times \mathbb{Z}^{\geq} \rightarrow \mathbb{Z}^{\geq}$. If $f$ is an arbitrary map that is not consistent, the problem of determining a lowest-fee path in $G$ from $s$ to $t$, where the amount transferred to $t$ is $a$, can be NP-hard.
\end{theorem}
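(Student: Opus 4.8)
The plan is to prove NP-hardness by a polynomial-time (Turing) reduction from \textsc{Subset Sum} --- equivalently \textsc{Partition} --- the canonical weakly NP-hard number problem. Given an instance consisting of positive integers $w_1,\dots,w_n$ and a target $T$, I would construct an LN graph $G$ together with a (necessarily non-consistent) fee map $f$ so that the value of a lowest-fee path from $s$ to $t$, with the fixed amount $a=1$ transferred to $t$, encodes whether some subset of the $w_i$ sums to $T$.

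The graph $G$ is a chain $s=u_0,u_1,\dots,u_n,t$. For each $i$ I insert a ``diamond'' between $u_{i-1}$ and $u_i$: an ``exclude item $i$'' branch whose arcs all carry fee $0$, and an ``include item $i$'' branch containing a single arc of constant fee $w_i$ (the remaining arcs having fee $0$); a short subdivided gadget keeps $G$ a simple digraph, or one may simply allow parallel arcs. The arc $(u_n,t)$ has fee $0$, and the first arc $(s,u_0)$ --- the \emph{tester} --- is the only non-consistent arc: $f\big((s,u_0),x\big)=0$ when $x=1+T$ and $f\big((s,u_0),x\big)=M$ otherwise, where $M:=1+T+\sum_i w_i$. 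All arc balances are set large enough that no balance constraint is ever active. Every value of $f$ is a non-negative integer and $f$ admits an obvious description of size polynomial in the input, so this is a bona fide polynomial reduction.

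Correctness hinges on the conservation identity obtained by telescoping the recurrence in Equation~(\ref{eq:recurrence_relation}): the total fee of any path from $s$ to $t$ equals $a_s-a$, so minimising the total fee is the same as minimising the amount $a_s$ that must be injected at $s$. Propagating amounts backwards from $t$ along the chain, a path that ``includes'' exactly the set $S$ arrives at $u_0$ with amount $1+\sum_{i\in S}w_i$; the tester arc then contributes $0$ to $a_s$ precisely when $\sum_{i\in S}w_i=T$ and contributes $M$ otherwise. Hence a subset summing to $T$ gives a path of total fee exactly $T$, whereas every path corresponding to a subset whose sum differs from $T$ has total fee at least $M>T$. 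Therefore the minimum total fee equals $T$ if and only if a subset summing to $T$ exists, so a polynomial-time algorithm for the path planning problem would decide \textsc{Subset Sum}. It remains to note that $f$ is indeed not consistent --- the tester arc already violates Equation~(\ref{eq:consist}) at $(a,a')=(T,1+T)$ as soon as $M\ge2$ --- so the constructed family of instances genuinely witnesses the claimed hardness. (One could alternatively route this through the equivalence of Lemma~\ref{thm:tdtn_ln_equ} and known hardness of minimum-time routing in non-FIFO time-dependent networks, but the direct reduction is self-contained.)

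I expect the delicate part to be the tester-arc gadget together with this conservation observation: because every arc fee is added to $a_s$ exactly once, the per-arc fee structure cannot by itself hide combinatorial information, so the reduction must funnel all the hardness into how a \emph{single} non-consistent arc reacts to the accumulated amount --- and then one must verify both directions of the reduction simultaneously with the non-consistency of $f$ and the polynomial bound on the size of its representation (in particular handling degenerate cases such as $T>\sum_i w_i$, which is why $M$ is taken larger than $T+\sum_i w_i$ rather than merely larger than $\sum_i w_i$).
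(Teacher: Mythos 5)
Your proof is correct, but it takes a genuinely different route from the paper. The paper's argument is a two-step citation: by Lemma~\ref{thm:tdtn_ln_equ} the LN problem with an arbitrary fee map is equivalent to time-dependent shortest paths with forbidden waiting (in the sense of Orda and Rom), and Zeitz's NP-hardness result for that TDN problem is then imported wholesale. You instead give a self-contained many-one/Turing reduction from \textsc{Subset Sum}: diamond gadgets with \emph{constant} (hence consistent) fees encode the include/exclude choices in the backward-propagated amount, and a single amount-sensitive ``tester'' arc at $s$ --- the only non-consistent arc --- rewards exactly the accumulated amount $1+T$. Your telescoping observation that the path fee equals $a_s - a$, the threshold $M = 1+T+\sum_i w_i$, the explicit violation of Equation~(\ref{eq:consist}) at $(T,1+T)$, and the large-balance assumption make both directions of the reduction and the non-consistency claim airtight; the only blemish is notational (you write the chain as ``$s=u_0,\dots$'' yet also use an arc $(s,u_0)$, so $s$ must be a separate vertex preceding $u_0$), which does not affect the argument. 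What each approach buys: the paper's proof is short and ties the result to the TDN literature it already leans on for the positive results, whereas yours is elementary, avoids any external hardness citation, and is more informative --- it shows that hardness already arises when all arcs but one have constant fees, pinpointing the single non-FIFO arc as the source of intractability. One caveat to state explicitly if you write this up: reducing from \textsc{Subset Sum} yields NP-hardness in the standard binary-encoding (weak) sense only, which fully suffices for the theorem as stated but leaves strong NP-hardness / pseudo-polynomial solvability open.
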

\begin{proof}
In Lemma~\ref{thm:tdtn_ln_equ} we proved an equivalence between the path planning problem in question and a specific type of path planning problem in a TDN. The latter problem is denoted \emph{forbidden waiting} by Orda et al.~\cite{orda1990shortest}, and Zeitz~\cite{zeitz2023np} proved that this problem is NP-hard. This implies the path planning problem in question is also NP-hard.
\end{proof}

\section{Bidirectional Variant of Dijkstra’s algorithm}
\label{sec:bi_path_find}
The path planning algorithm described in the previous section can be considered  \emph{unidirectional} in that it searches for a path in a single direction from the destination vertex to the source vertex. On the other hand, bidirectional path planning algorithms search for a path in two directions simultaneously. The first search takes place from the destination vertex towards the source vertex, the second search takes place from the source vertex towards the destination vertex. A path is then established when both searches intersect \cite{russell2021artificial}. If applicable to the LN, bidirectional path planning has the potential to reduce the time and space requirements relative to unidirectional path planning. This has been demonstrated empirically in many types of real-world networks including street networks \cite{bast2016route}.

To illustrate how bidirectional path planning can reduce the computational effort needed for path planning in the LN, consider the example LN displayed in Figure~\ref{fig:hub_spoke_ln} which has a hub-and-spoke topology. Although this is not an accurate representation of the real LN, the LN is known to exhibit topologies with hub-and-spoke structures \cite{martinazzi2020evolving}. In this context, vertices on the spokes likely correspond to customers and merchants, while vertices in the hub correspond to routing vertices. In this example, consider the case where we wish to transfer an amount $a$ from vertex $s$ to vertex $t$ and let us assume that all arcs have sufficient balance to transfer this amount plus any necessary fees. There is a single path $(s,r), (r, t)$ from $s$ to $t$ which contains a single intermediate vertex $r$. If we apply the unidirectional path planning algorithm described in the previous section, this algorithm will explore every vertex adjacent to $r$ before finding the lowest-fee path. On the other hand, if we could apply a bidirectional path planning algorithm, this algorithm would only need to explore the three vertices $s$, $r$ and $t$ to find the lowest-fee path. This is because both searches intersect when they explore the vertex $r$ and therefore do not need to explore any other vertices on the spokes.

\begin{figure}
\begin{center}
\includegraphics[height=6cm]{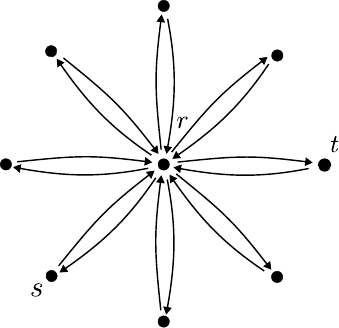}
\caption{An example LN with a hub-and-spoke topology is displayed.}
\label{fig:hub_spoke_ln}
\end{center}
\end{figure}

The asymptotic computational complexity of both the unidirectional and bidirectional path planning algorithms are both $O(|E| + |V| \log |V|)$. The former corresponds to applying Dijkstra’s algorithm once, while the latter corresponds to applying Dijkstra’s algorithm twice. However, the empirical benefits of the bidirectional algorithm can be explained by considering the branching factor $b$ of the searches and the number of arcs $d$ in the lowest-fee path. In graph searching algorithms such as Dijkstra's algorithm, the branching factor equals the average number of branching paths the search can explore from a given vertex \cite[Chapter~3]{russell2021artificial}. This approximately equals the average vertex degree. If a path planning algorithm is applied, the number of vertices explored will be approximately $O(b^d)$. This follows from the fact that the search will explore a search tree to a depth $d$ before the lowest-fee path is found and at each step, the search tree will expand by a factor of $b$. On the other hand, if a bidirectional path planning algorithm is applied, the number of vertices explored will be approximately $2 \times O(b^{d/2})$ which is less than $O(b^d)$ \cite[Chapter~3]{russell2021artificial}. Here, the multiple $2$ represents the fact that two searches are applied. The $d/2$ exponent represents the fact that each of these searches will explore a search tree to a depth of $d/2$ before the lowest-fee path is found. The above analysis implies that, if bidirectional path planning could be successfully applied to the LN, this would reduce the number of vertices explored. As discussed in the previous section, applying a path planning algorithm from the destination vertex $t$ to the source vertex $s$ is possible using Algorithm~\ref{alg:dijkstra_alg}. However, applying a path planning algorithm in the opposite direction is challenging.

To partially overcome this challenge, we can use the fact that fees charged by outgoing arcs adjacent to the vertex $s$ are paid to $s$. That is, any fees paid by the vertex $s$ will automatically be refunded and therefore can be considered to be zero. This fact can be modelled as a transformation of the LN such that $f_b(s,v)=0$ and $f_r(s,v)=0.0$ for all $v \in \lbrace u : (s,u) \in E \rbrace$. For example, applying this transformation to the LN in Figure~\ref{fig:dijkstra_eg} gives the LN displayed in Figure~\ref{fig:dijkstra_eg_transform}. Note that the transformation does not alter arc balances. As a consequence of the fact that the LN does not support negative fees, the value $0.0$ is a tight lower bound for the fees charged for transferring any amount. Therefore, when a vertex $v$ in the set $\lbrace u : (s,u) \in E \rbrace$ has been explored in a unidirectional search from $t$ and the arc $(s,v)$ has sufficient balance, a lowest cost path from $t$ to $s$ can be inferred by appending the arc $(s,v)$ to the lowest-fee path from $v$ to $t$. In this case, the search can be terminated before exploring the vertex $s$. For example, consider again the LN  in Figure~\ref{fig:dijkstra_eg}. When the vertex $i$ has been explored in a unidirectional search from $t$, a lowest cost path from $t$ to $s$ can be inferred by appending the arc $(s,i)$ to the lowest cost path from $i$ to $t$.

\begin{figure}
\begin{center}
\includegraphics[height=3.5cm]{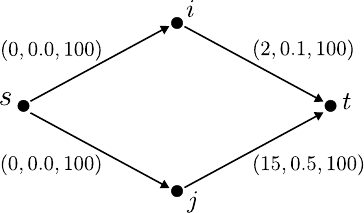}
\caption{An example LN is displayed.}
\label{fig:dijkstra_eg_transform}
\end{center}
\end{figure}

Using this insight, we can extend the termination condition used in the unidirectional path planning method shown in Algorithm~\ref{alg:dijkstra_alg}. Specifically, instead of just terminating when the next vertex $v$ to be explored is the source $s$ (line~6), we can also terminate when $v$ belongs to the set $\lbrace u : (u,s) \in E \rbrace$ and the arc in question has a sufficient balance (line~8). Algorithm~\ref{alg:dijkstra_alg_bi} provides a full description of this new algorithm. This is the same as  Algorithm~\ref{alg:dijkstra_alg} except for the additional termination condition at line~8. We now prove the correctness of this algorithm.

\begin{theorem} \label{thm:bi_correctness}
Algorithm~\ref{alg:dijkstra_alg_bi} for computing a lowest-fee path in the LN is correct.
\end{theorem}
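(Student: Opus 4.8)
The plan is to derive the correctness of Algorithm~\ref{alg:dijkstra_alg_bi} from the already‑proved correctness of Algorithm~\ref{alg:dijkstra_alg}, exploiting the fact that the two algorithms differ only in the extra early‑termination test at line~8, which merely causes an earlier \texttt{break}. Hence the run of Algorithm~\ref{alg:dijkstra_alg_bi} agrees step‑for‑step with the run of Algorithm~\ref{alg:dijkstra_alg} — the same priority‑queue operations, the same $c$ and $p$ maps, the same order of settled vertices — up to the iteration, if any, at which the new test fires. If it never fires, Algorithm~\ref{alg:dijkstra_alg_bi} returns exactly the output of Algorithm~\ref{alg:dijkstra_alg}, so its correctness is inherited; this also disposes of the case where no feasible $s$‑to‑$t$ path exists, since the new test can only fire at a vertex with a finite $c$‑label, which already witnesses a feasible path. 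All the work is therefore in the case where the test fires at a popped vertex $v$: then $v$ is an out‑neighbour of $s$ in $G$, the arc $(s,v)$ has balance at least the amount $a+c(v)$ that a lowest‑fee route would deliver to $v$, and Algorithm~\ref{alg:dijkstra_alg_bi} returns the path $\widehat P$ consisting of the arc $(s,v)$ followed by the lowest‑fee $v$‑to‑$t$ path recorded in $p$. I must show $\widehat P$ is a feasible lowest‑fee $s$‑to‑$t$ path delivering $a$ at $t$.

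For that case I would argue as follows. \emph{Feasibility} of $\widehat P$ is immediate: the $v$‑to‑$t$ portion meets every balance constraint because line~10 relaxes an arc only when its balance suffices, and the prepended arc is feasible by the firing hypothesis. For \emph{optimality} I would first invoke the label‑setting property of the underlying Dijkstra‑type search, which is valid because $f_{br}$, and the barrier map $f_{br}^\infty$ of Lemma~\ref{thm:fbr_consistent_infty}, are consistent, so the search behaves exactly as ordinary Dijkstra (Lemma~\ref{thm:uni_correctness}): when $v$ is popped, $c(v)$ equals the minimum fee of a feasible $t$‑to‑$v$ path in $G^T$, i.e.\ (via Theorem~\ref{thm:reverse_search}) of a feasible $v$‑to‑$t$ path in $G$, and every still‑unsettled vertex has feasible‑minimum fee at least $c(v)$. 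Second, I would use the observation already made in the text and in Figure~\ref{fig:dijkstra_eg_transform}: a fee on an arc leaving $s$ is paid back to $s$, so its effective value is $0$, which — since the LN forbids negative fees — is also a lower bound on every arc fee. Now let $Q$ be any feasible $s$‑to‑$t$ path, with first hop to a vertex $v'$; its effective fee equals the fee of its (feasible) $v'$‑to‑$t$ sub‑path. If $v'=v$ or $v'$ is still unsettled, that fee is $\ge c(v)$ by the label‑setting bound, and $c(v)$ is exactly the effective fee of $\widehat P$. The only remaining possibility, $v'$ settled strictly before $v$, I would rule out: at the iteration that popped $v'$ the new test was examined and did not fire, so the arc from $v'$ to $s$ had balance below $a+c(v')$; since a lowest‑fee $v'$‑to‑$t$ path also minimises the amount that must be delivered to $v'$, every feasible $v'$‑to‑$t$ sub‑path forces at least $a+c(v')$ onto that arc, contradicting feasibility of $Q$. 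Hence $\widehat P$ is no worse than any feasible $Q$, as required.

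I expect this last sub‑case to be the main obstacle, as it is the only point where the balance bookkeeping interacts non‑trivially with the fee minimisation: one has to pin down precisely which amount the termination test compares against at a ``spoke'' vertex, and justify the monotone equivalence ``smallest total fee along a $v'$‑to‑$t$ path $\iff$ smallest amount that must be delivered to $v'$'', which is exactly what lets an earlier balance failure be propagated to every competing route through $v'$. A secondary point requiring care is that the whole argument leans on consistency through the label‑setting property; as the example around Figures~\ref{fig:consistent_map_eg_a}--\ref{fig:consistent_map_eg_b} shows, dropping consistency breaks even Algorithm~\ref{alg:dijkstra_alg}, and hence \emph{a fortiori} Algorithm~\ref{alg:dijkstra_alg_bi}. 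A possibly cleaner route would be to apply the correctness of Algorithm~\ref{alg:dijkstra_alg} directly to the transformed graph $\bar G$ in which all arcs out of $s$ have their fees zeroed (Figure~\ref{fig:dijkstra_eg_transform}): on ${\bar G}^{T}$ the arcs into $s$ carry fee $0$, and the early termination of Algorithm~\ref{alg:dijkstra_alg_bi} is precisely stopping that correct run one zero‑fee arc before $s$ is settled, with the balance condition at line~8 serving as the feasibility check on those arcs.
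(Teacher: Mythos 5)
Your proposal is correct and takes essentially the same route as the paper: correctness is inherited from the unidirectional result (Lemma~\ref{thm:uni_correctness}) combined with the observation that fees on arcs leaving $s$ are refunded to $s$ and are therefore effectively zero, so terminating at a settled in-neighbour of $s$ whose arc has sufficient balance yields a lowest-fee path from $t$ to $s$ in $G^T$. The paper's own proof is only two sentences and leaves implicit precisely the sub-case you treat explicitly --- a neighbour $v'$ of $s$ settled before $v$ at which the line-8 balance test failed, which you rule out via the equivalence ``minimum fee to $v'$ $\iff$ minimum amount delivered to $v'$'' --- so your write-up (and your alternative formalisation via the transformed graph with zero-fee arcs out of $s$) is a more detailed elaboration of the same argument rather than a different one.
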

\begin{proof}
It follows from Theorem~\ref{thm:uni_correctness} that the algorithm computes a lowest-fee path from the vertex $t$ to the vertex $s$ or a vertex in the set $v \in \lbrace u : (u,s) \in E \rbrace$. When either of these events occurs, the algorithm, in turn, computes a lowest-fee path from $t$ to $s$ in $G^T$. Hence, Algorithm~\ref{alg:dijkstra_alg_bi} is correct.
\end{proof}

\begin{algorithm}
\caption{Bidirectional Variant of Dijkstra’s algorithm}
\label{alg:dijkstra_alg_bi}
\KwIn{A graph $G^T=(V,E)$ that equals the transpose of the LN graph $G$; a payment source $s \in V$; a payment destination $t \in V$; and a payment amount $a \in \mathbb{Z}^{>}$ to be transferred to $t$.}
\KwOut{A map $c: V \rightarrow \mathbb{R}$ that, for each vertex $v$, returns the fee for a lowest-fee path from $t$ to $v$ in $G^T$ where the amount transferred to $t$ is $a$; a map $p: V \rightarrow V$ that, for each vertex $v$, returns the previous vertex in a lowest-fee path from $t$ to $v$ in $G^T$.}
For all $v\in V$, set $c(v)=\infty$\\
Set $c(t)=0$ and insert the ordered pair $(c(t),t)$ into a priority queue $Q$\\
	\While{$|Q| > 0 $}{
		Let $(c(v), v)$ be the element in $Q$ with the minimum value for $c(v)$\\
		Remove the element $(c(v), v)$ from $Q$\\
		\If{$v=s$}{
			break\\
		}
		\If{$v\in\{u:(u,s)\in E\}$ $\land$ $c(v)+a\leq b(v,s)$}{
			Set $c(s)=c(v)$, set $p(s)=v$, and break\\
		}
		\ForAll{$v' \in \lbrace u : (v,u) \in E \rbrace$}{
			$c_{v'} = c(v) + f_{br}((v,v'), a)$ \\
			\If{$c_{v'} < c(v')$ $\land$ $c(v)+a \leq b(v,v')$}{
				Add the element $(c_{v'}, v')$ to $Q$ and, if present, remove element $(c(v'),v')$ from $Q$\\
				Set $c(v')=c_{v'}$ and set $p(v')=v$\\
			}
		}
	}
\end{algorithm}

Algorithm~\ref{alg:dijkstra_alg_bi} is a variant on bidirectional path planning in which a full search is initiated from $t$ and only a partial search, consisting of a single step, is initiated from $s$. Hence, we refer to this path planning algorithm as a \emph{partial bidirectional path planning} algorithm. As discussed earlier, a bidirectional path planning algorithm explores fewer vertices than a unidirectional path planning algorithm. Since the partial bidirectional path planning terminates the search one step earlier, it will explore approximately $O(b^{d-1})$ where $b$ is the branching factor and $d$ is the number of arcs in the lowest-fee path. This is less than that of the unidirectional path planning algorithm equals $O(b^{d})$ but greater than that of the bidirectional path planning algorithm, which equals $2 \, O(b^{d/2})$. The number of vertices explored by the partial bidirectional algorithm is not significantly less than that for unidirectional path planning. However, as discussed in the results section below, we find in practice that partial bidirectional path planning exhibits better performance. This can be attributed to the hub and spoke topology of the LN described earlier.

\section{Experimental Results}
\label{sec:results}
In this section, we present an empirical evaluation of the unidirectional and partial bidirectional path planning algorithms described in Sections~\ref{sec:uni_path_find} and \ref{sec:bi_path_find} respectively. As discussed in the Related Works section, existing path planning algorithms can broadly be categorised as centralised algorithms and decentralised algorithms. The unidirectional algorithm is the most commonly used centralised algorithm and therefore is a suitable benchmark to compare the partial bidirectional algorithm against. On the other hand, decentralised algorithms are not commonly used due to the lack of payment privacy they provide. In turn, implementations of these methods are difficult to obtain. Therefore, we do not consider any decentralised algorithms in our comparison.

In our trials, we used a real snapshot of the LN captured on 22 February 2024 using the Lightning Network Daemon (lnd) implementation of an LN node.\footnote{https://docs.lightning.engineering/} On this same date, one Satoshi (the unit of currency used by the LN) had a value of approximately 0.0004 GBP. The graph representation of the above snapshot contains 9,420 vertices and 32,202 arcs; however, many of these arcs had no fee information making them unusable for routing and were therefore removed. Furthermore, many vertices were isolated with no adjacent arcs and these were also removed. Following these steps, the resulting graph contained 2,453 vertices and 26,000 arcs. The mean and median LN snapshot channel capacities were 10,490,032 Satoshis and 4,000,000 Satoshis respectively. The 10th and 90th percentiles of LN snapshot channel capacities were 500,000 Satoshis and 17,137,075 Satoshis respectively. To help maintain privacy and reduce communication bandwidth, individual arc balances are not shared. We therefore assumed that all channels were balanced. That is, to each of the two arcs corresponding to a given channel, we assigned a balance equal to half of the corresponding channel capacity. Currently, there is no historical transaction data for the LN and such data is difficult to infer by design. Therefore, to test the two path planning algorithms, we constructed three sets of simulated payments on the LN snapshot described above. The first set of payments contained 10,000 elements and was constructed using the following rejection sampling-based approach. We first sampled without replacement the payment source and destination vertices from the set of graph vertices $V$. This ensures that the pair of vertices is distinct. Next, we sampled the payment amount uniformly at random from the set $\{1,2,\ldots, 1000000\}$. A recent report by the cryptocurrency company River indicates that most LN payment amounts lie in this interval \cite{river2023}. Given the above payment source, destination, and amount, we then determined if the payment in question was feasible; that is, whether there existed a path from the source to the destination that could transfer the amount. If the payment was feasible, it was added to the set of payments. Otherwise, it was discarded. This process was repeated until the set of payments contained 10,000 elements.

For each simulated payment, we computed the number of vertices explored by the unidirectional and partial bidirectional algorithms by counting the number of corresponding for-loop iterations. (The loops in question start at lines~8 and 10 of Algorithms \ref{alg:dijkstra_alg} and \ref{alg:dijkstra_alg_bi} respectively). For the unidirectional algorithm, the mean and standard deviation of the number of vertices explored was $7,452 \pm 3,190$. For the partial bidirectional algorithm, the corresponding figures were $4,069 \pm  3,594$, equating to a 45\% reduction in the mean. Next, for each payment, we computed the percentage reduction in vertices explored by the partial bidirectional algorithm. That is, the number explored by unidirectional minus the number explored bidirectional divided by the number explored by unidirectional. This percentage represents the reduction in computation provided by the proposed bidirectional algorithm for the path planning task in question. Figure~\ref{fig:uniform_hist} displays a histogram of these percentage values. The mean and standard deviation of these percentages are 47\% and 35\% respectively.

\begin{figure}
\begin{center}
\subfigure[]{\includegraphics[height=4cm]{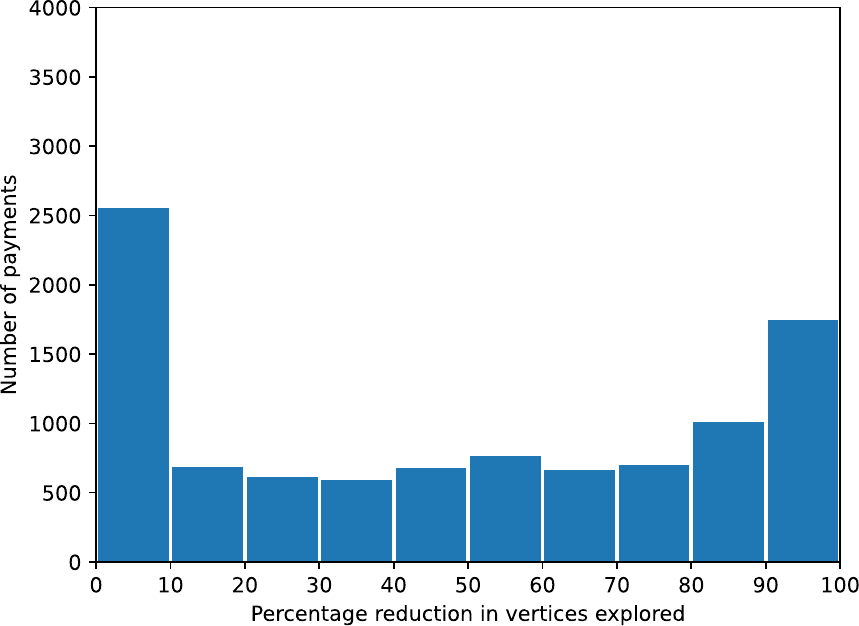}
\label{fig:uniform_hist}}
\hspace{.5cm}
\subfigure[]{\includegraphics[height=4cm]{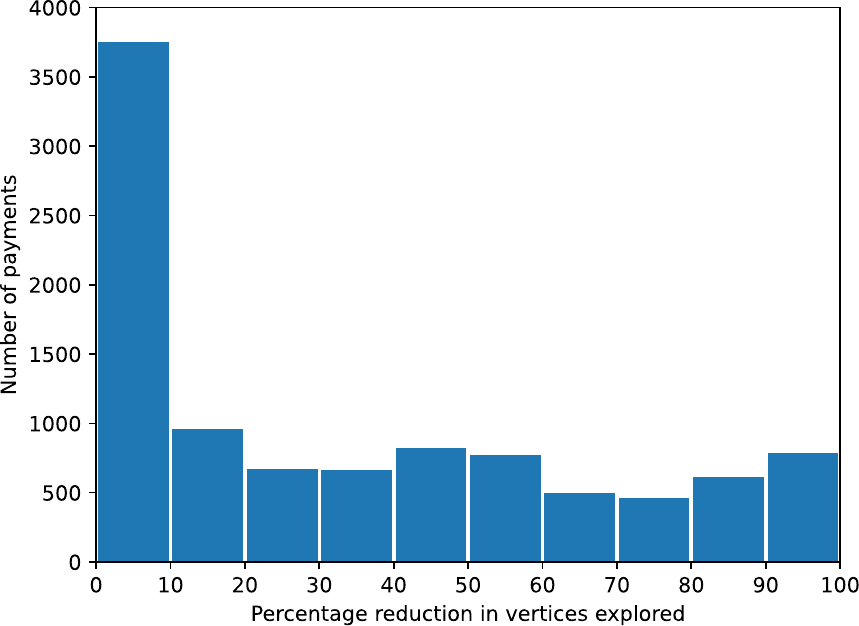}
\label{fig:edge_hist}}
\caption{For two sets of simulated payments, the corresponding histograms of the percentage reduction in the number of vertices explored by the partial bidirectional algorithm relative to the unidirectional algorithm are displayed. The vertical axis is the number of payments, while the horizontal axis gives the percentage reduction.}
\end{center}
\end{figure}

Our second set of simulated payments also contained 10,000 elements. This set was constructed using an approach similar to the previous though, in this case, we sampled the source and destination vertices from the set of all graph vertices with out-degrees of less than four. The motivation for considering these low-degree vertices is that they tend to correspond to merchants and customers instead of routing peers. Similarly to before, we then computed the percentage reduction in the number of vertices explored by the partial bidirectional algorithm. In this case, the mean number of vertices explored by the unidirectional algorithm was $8,102 \pm 2,882$ whereas, for the partial bidirectional algorithm, these figures were $5,491 \pm 3,574$, giving a 32\% reduction in the mean. For each payment, we then computed the percentage reduction in vertices explored by the partial bidirectional algorithm. Figure~\ref{fig:edge_hist} displays a histogram of these values. The mean and standard deviation of these percentage values are 33\% and 32\% respectively.

Our final set of simulated payments contained 100,000 elements. This set was constructed using the same approach as the first set. We then used this set of payments to measure the wall-clock running time of both path planning algorithms. Both algorithms were implemented as single-thread computer programs using the Rust programming language and executed on a PC with an Inter i9-14900K CPU and 32GB of RAM. We computed the set of payments using each algorithm and measured the corresponding wall-clock running times. We then repeated this step ten times and computed the corresponding mean wall-clock running times. The mean wall-clock running times for the unidirectional and partial bidirectional path planning algorithms were 66 and 48 seconds respectively. This equates to a mean wall-clock running time for an individual payment of 0.00066 and 0.00048 seconds respectively. The latter algorithm therefore provides a 27\% reduction in mean wall-clock running time.

In summary, for all three sets of simulated payments considered, the partial bidirectional path planning algorithm explores significantly fewer vertices than the unidirectional path planning algorithm and, in turn, provides improved computational performance.

\section{Conclusions}
\label{sec:conclusions}
In this article, we have considered the problem of path planning in PCNs. We have provided correctness and computational complexity results for a variant of Dijkstra's algorithm, which is the most cited algorithm for solving this problem. These results show that if the PCN fee function has a consistency property, the path planning algorithm is correct and has polynomial computational complexity. This finding therefore has the potential to inform the development of new or existing PCN fee functions. In this article, we have also proposed a slight modification to the above algorithm based on the concept of bidirectional search. We demonstrate empirically that this proposed modification provides improved computational performance.

These findings notwithstanding, solving the path planning problem is only one of many challenges that must be overcome before PCNs can fully deliver on their potential to solve the scaling challenges of cryptocurrencies \cite{dotan2021survey}. Other challenges include mitigating security attacks such as denial-of-service \cite{shikhelman2022unjamming} and griefing \cite{mazumdar2022strategic} attacks, ensuring the privacy of users \cite{kappos2021empirical}, ensuring network topologies that allow efficient payments \cite{guasoni2023lightning}, and allowing users with edge devices such as mobile phones to make and receive payments. In future work, we hope to help address some of these other challenges.

\section*{Acknowledgements}
This work was supported by the Security, Crime and Intelligence Innovation Institute at Cardiff University through their Kickstarter Funding scheme.

\bibliographystyle{unsrt}
\bibliography{pcn_correctness_complexity_analysis}

\appendix
\section{Recurrence Relation}
In the following theorem, we solve the recurrence relation defined in Equation~(\ref{eq:recurrence_relation}) to give a closed-form expression for each $a_i$ value.
\begin{theorem}
The solution of the recurrence relation in Equation~(\ref{eq:recurrence_relation}) with the condition $a_{n+1}=a$ is
\begin{equation} \label{eq:recurrence_relation_solution}
a_{i} = \left( \prod_{k=i+1}^{n+1} \left( 1+f_r(e_k) \right)  \right) \left( a_{n+1} + \sum_{m=i+1}^{n+1} \frac{f_b(e_m)}{\prod_{k=m}^{n+1}1+f_r(e_k)} \right).
\end{equation}
\end{theorem}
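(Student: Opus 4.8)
The plan is to recognise the recurrence in Equation~(\ref{eq:recurrence_relation}) as a first-order linear inhomogeneous recurrence and to solve it by the standard ``unroll, guess, then verify by induction'' method. First I would substitute the explicit form of the fee map, $f_{br}(e_i,a_i)=f_b(e_i)+f_r(e_i)\cdot a_i$, into the recurrence to obtain
\begin{equation*}
	a_{i-1} = \bigl(1+f_r(e_i)\bigr)\,a_i + f_b(e_i),
\end{equation*}
which is a linear recurrence with multiplicative coefficient $1+f_r(e_i)$ and additive term $f_b(e_i)$; these are precisely the quantities appearing in the claimed closed form.

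Second, I would unroll this recurrence from the boundary value $a_{n+1}=a$: computing $a_n$, then $a_{n-1}$, and so on, reveals that the coefficient of $a_{n+1}$ accumulates as a product $\prod_{k=i+1}^{n+1}\bigl(1+f_r(e_k)\bigr)$, while each base fee $f_b(e_m)$ picked up along the way is multiplied by the product of the $1+f_r$ factors whose indices lie strictly between $m$ and the current index. Re-expressing that accumulated sum by factoring out the full product $\prod_{k=i+1}^{n+1}\bigl(1+f_r(e_k)\bigr)$ and dividing term by term produces exactly the right-hand side of Equation~(\ref{eq:recurrence_relation_solution}). An equivalent and cleaner derivation of the same guess is to divide the recurrence through by $\prod_{k=i}^{n+1}\bigl(1+f_r(e_k)\bigr)$, i.e.\ to work with the normalised variable $a_i/\prod_{k=i+1}^{n+1}\bigl(1+f_r(e_k)\bigr)$, whereupon the homogeneous part telescopes and the sum over base fees drops out immediately.

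Third, I would prove the guessed formula rigorously by downward induction on $i$. The base case $i=n+1$ is immediate: the product $\prod_{k=n+2}^{n+1}$ and the sum $\sum_{m=n+2}^{n+1}$ are both empty, so the formula collapses to $a_{n+1}=a_{n+1}$. For the inductive step, assuming Equation~(\ref{eq:recurrence_relation_solution}) holds for $a_i$, I would substitute it into $a_{i-1}=\bigl(1+f_r(e_i)\bigr)a_i+f_b(e_i)$. The key observations are that the extra factor $1+f_r(e_i)$ merges with $\prod_{k=i+1}^{n+1}\bigl(1+f_r(e_k)\bigr)$ to give $\prod_{k=i}^{n+1}\bigl(1+f_r(e_k)\bigr)$, which is exactly the leading product required by the $i-1$ version of the formula, and that the stray additive term $f_b(e_i)$ can be rewritten as that same product times $f_b(e_i)/\prod_{k=i}^{n+1}\bigl(1+f_r(e_k)\bigr)$, so it becomes precisely the $m=i$ summand of the new sum. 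Collecting these pieces gives Equation~(\ref{eq:recurrence_relation_solution}) with $i$ replaced by $i-1$.

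The only step requiring real care is this last bookkeeping: splitting off the $m=i$ term of the sum and checking that the denominators line up after the factor $1+f_r(e_i)$ has been absorbed into the leading product. Nothing here is deep, but the nested products over shifting index ranges make off-by-one errors easy, so I would carry the empty-product and empty-sum conventions explicitly throughout and sanity-check the inductive step on a small instance such as $n=1$ or $n=2$.
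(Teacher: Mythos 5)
Your proof is correct, but it takes a genuinely different route from the paper's. The paper first reverses both the arc and amount sequences via a change of variables, recognises the resulting forward recurrence $a'_{i+1}=(1+f_r(e'_i))\,a'_i+f_b(e'_i)$ as a standard first-order linear non-homogeneous recurrence with variable coefficients, invokes Theorem~7.1 of Charalambides to write down its general solution, and then inverts the change of variables to recover Equation~(\ref{eq:recurrence_relation_solution}). You instead work directly with the backward recurrence $a_{i-1}=(1+f_r(e_i))\,a_i+f_b(e_i)$ and verify the claimed closed form by downward induction on $i$: the base case $i=n+1$ reduces to the empty-product and empty-sum conventions, and the inductive step absorbs the factor $1+f_r(e_i)$ into the leading product and identifies the stray additive term with the $m=i$ summand, via $\bigl(\prod_{k=i}^{n+1}(1+f_r(e_k))\bigr)\cdot f_b(e_i)\big/\prod_{k=i}^{n+1}(1+f_r(e_k))=f_b(e_i)$; this checks out exactly. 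Both arguments are sound. Yours is self-contained and entirely elementary, and it avoids the two index reversals, which in the paper produce products and sums written over decreasing index ranges that must then be re-normalised and are the most error-prone part of that derivation; the cost is that you must guess the formula before verifying it. The paper's version is shorter on the page because it outsources the core verification to a cited textbook theorem, but it asks the reader to track the change of variables and its inverse carefully. Your stated attention to the empty-product convention and the $m=i$ bookkeeping is precisely where the care is needed, so there is no gap.
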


\begin{proof}
Let $e'_1, e'_2, \dots, e'_n$ be the sequence of arcs where $e'_i=e_{n+1-i}$, and let $a'_1, a'_2, \dots, a'_{n+1}$ be the sequence of values where $a'_i=a_{n+2-i}$. That is, we reverse the order of both sequences. By performing this change of variables, Equation~(\ref{eq:recurrence_relation}) can be rewritten as the following recurrence relation with the initial condition $a'_{1}=a$:
\begin{equation} \label{eq:recurrence_relation_change}
a'_{i+1} = a'_i + f_b(e'_i) + f_r(e'_i) a'_i.
\end{equation}

The recurrence relation in Equation~(\ref{eq:recurrence_relation_change}) can be refactored as
\begin{subequations}
	\begin{align}
	a'_{i+1} & = a'_i + f_b(e'_i) + f_r(e'_i) a'_i \\
	& = (1 + f_r(e'_i))a'_i + f_b(e'_i). \label{eq:recurrence_relation_proof_b}
\end{align}
\end{subequations}
The recurrence relation in Equation~(\ref{eq:recurrence_relation_proof_b}) has the properties of being linear, first-order, non-homogeneous and having variable coefficients \cite{charalambides2002enumerative}. First-order refers to the fact that the term $a'_{i+1}$ is a function of only the previous term $a'_i$ in the sequence. Non-homogeneous means that the term $f_b(e_i')$ is not a multiple of $a'_{i-1}$ or $a'_i$. Finally, having variable coefficients refers to the fact that the coefficients $f_b(e_i')$ and $(1 + f_r(e_i'))$ are a function of $i$. A direct application of Theorem~7.1 in \cite{charalambides2002enumerative} solves this recurrence relation, where the solution is defined as
\begin{equation}
a'_{i} =  \left( \prod_{k=1}^{i-1} \left( 1+f_r(e'_k) \right)  \right) \left( a'_{1} + \sum_{m=1}^{i-1} \frac{f_b(e'_m)}{\prod_{k=1}^{m}1+f_r(e'_k)} \right).
\end{equation}

Performing an inverse of the change of variables previously performed gives the following solution with the initial condition $a_{n+1}=a$:
\begin{equation}
a_{n+2-i} = \left( \prod_{k=n+1}^{n+1-i} \left( 1+f_r(e_k) \right)  \right) \left( a_{n+1} + \sum_{m=n+1}^{n+1-i} \frac{f_b(e_m)}{\prod_{k=n+1}^{m}1+f_r(e_k)} \right)
\end{equation}
implying that
\begin{align}
a_{i} &= \left( \prod_{k=n+1}^{i+1} \left( 1+f_r(e_k) \right)  \right) \left( a_{n+1} + \sum_{m=n+1}^{i+1} \frac{f_b(e_m)}{\prod_{k=n+1}^{m}1+f_r(e_k)} \right) \\
&= \left( \prod_{k=i+1}^{n+1} \left( 1+f_r(e_k) \right)  \right) \left( a_{n+1} + \sum_{m=i+1}^{n+1} \frac{f_b(e_m)}{\prod_{k=m}^{n+1}1+f_r(e_k)} \right)
\end{align}
as required.
\end{proof}

The following Python code listing implements the above method for directly solving the recurrence relation in question. By comparing solutions computed using the recursive and direct methods defined in Equations~(\ref{eq:recurrence_relation}) and (\ref{eq:recurrence_relation_solution}) respectively, this code can be used to empirically verify the correctness of the above theorem.

\begin{lstlisting}[language=Python, caption=Solving the recurrence relation recursively and directly.]
import numpy

a = 102 #Amount a
b = [10, 5, 3.4, 11, 7] # path arc base fees
r = [0.1, 0.211, 0.15, 0.12, 0.11] # path arc fee rates
n = len(b)-1

#Compute the sequence of values recursively using Equation (2)
ai = a
print("a[", n+1, "]: ", ai)
for i in range(len(b)-1, -1, -1):
    ai += b[i] + ai*r[i]
    print("a[", i, "]: ", ai)
print()

#Compute the sequence of values directly using Equation (6)
for i in range(n+1, -1, -1):
    ai = numpy.prod([(1+r[k]) for k in range(i,n+1)]) * (a + numpy.sum([(b[m]/numpy.prod([(1+r[k]) for k in range(m,n+1)])) for m in range(i,n+1)]))
    print("a[", i, "]: ", ai)
\end{lstlisting}

\end{document}